\newcommand{\e}{\mathrm{e}}
\newcommand{\OPT}[1][]{\ensuremath{%
\ifthenelse{\equal{#1}{}}{\text{OPT}}{\text{OPT}^{(#1)}}}\xspace}
\newcommand{\OPTH}[1][]{\ensuremath{%
\ifthenelse{\equal{#1}{}}{\hat{\text{OPT}}}{\hat{\text{OPT}}^{(#1)}}}\xspace}
\newcommand{\INFFUN}[2][]{\ensuremath{%
\ifthenelse{\equal{#1}{}}{f_{#2}}{f^{(#1)}_{#2}}}\xspace}
\newcommand{\INFFUNP}[2][]{\ensuremath{%
\ifthenelse{\equal{#1}{}}{\hat{f}_{#2}}{\hat{f}^{(#1)}_{#2}}}\xspace}
\newcommand{\InfFun}[3][]{\ensuremath{\INFFUN[#1]{#2}(#3)}\xspace}
\newcommand{\InfFunP}[3][]{\ensuremath{\INFFUNP[#1]{#2}(#3)}\xspace}
\newcommand{\Thold}[2][]{\ensuremath{%
\ifthenelse{\equal{#1}{}}{\theta_{#2}}{\theta^{(#1)}_{#2}}}\xspace}
\newcommand{\Active}[2][]{\ensuremath{%
\ifthenelse{\equal{#1}{}}{A_{#2}}{A^{(#1)}_{#2}}}\xspace}
\newcommand{\ActiveUN}[2][]{\ensuremath{%
\ifthenelse{\equal{#1}{}}{AU_{#2}}{AU^{(#1)}_{#2}}}\xspace}
\newcommand{\FinalActive}[1][]{\ensuremath{%
\ifthenelse{\equal{#1}{}}{\hat{A}}{\hat{A}^{(#1)}}}\xspace}
\newcommand{\FinalSize}[2][]{\ensuremath{%
\ifthenelse{\equal{#1}{}}{\sigma(#2)}{\sigma^{(#1)}(#2)}}\xspace}
\newcommand{\FINALSIZE}[1][]{\ensuremath{%
\ifthenelse{\equal{#1}{}}{\sigma}{\sigma^{(#1)}}}\xspace}
\newcommand{\FinalSizeMarg}[2][]{\ensuremath{%
\ifthenelse{\equal{#1}{}}{\sigma_m(#2)}{\sigma_m^{(#1)}(#2)}}\xspace}
\newcommand{\Camp}[1][]{\ensuremath{%
\ifthenelse{\equal{#1}{}}{C}{C^{(#1)}}}\xspace}
\newcommand{\VALFUN}[1][]{\ensuremath{%
\ifthenelse{\equal{#1}{}}{w}{w^{(#1)}}}\xspace}
\newcommand{\ValFun}[2][]{\ensuremath{%
\ifthenelse{\equal{#1}{}}{w(#2)}{w^{(#1)}(#2)}}\xspace}
\newcommand{\PVALFUN}[1][]{\ensuremath{%
\ifthenelse{\equal{#1}{}}{\hat{w}}{\hat{w}^{(#1)}}}\xspace}
\newcommand{\PValFun}[2][]{\ensuremath{%
\ifthenelse{\equal{#1}{}}{\hat{w}(#2)}{\hat{w}^{(#1)}(#2)}}\xspace}
\newcommand{\ValFunMarg}[2][]{\ensuremath{%
\ifthenelse{\equal{#1}{}}{w_m(#2)}{w_m^{(#1)}(#2)}}\xspace}
\newcommand{\PValFunMarg}[2][]{\ensuremath{%
\ifthenelse{\equal{#1}{}}{\hat{w_m}(#2)}{\hat{w_m}^{(#1)}(#2)}}\xspace}
\newcommand{\BUDGET}[1][]{\ensuremath{%
\ifthenelse{\equal{#1}{}}{B}{B^{(#1)}}}\xspace}
\newcommand{\RBUDGET}[1][]{\ensuremath{%
\ifthenelse{\equal{#1}{}}{B}{B_R^{(#1)}}}\xspace}
\newcommand{\SBUDGET}[1][]{\ensuremath{%
\ifthenelse{\equal{#1}{}}{K}{K^{(#1)}}}\xspace}
\newcommand{\PER}[1][]{\ensuremath{%
\ifthenelse{\equal{#1}{}}{\gamma}{\gamma^{(#1)}}}\xspace}
\newcommand{\EMOBJ}{\ensuremath{W}\xspace}
\newcommand{\EMObj}[1]{\ensuremath{\EMOBJ(#1)}\xspace}
\newcommand{\MEOBJ}{\ensuremath{\bar{W}}\xspace}
\newcommand{\MEObj}[1]{\ensuremath{\MEOBJ(#1)}\xspace}
\newcommand{\ExposureBound}[1]{\ensuremath{r_{#1}}\xspace}
\newcommand{\EdgeProb}[2][]{\ensuremath{%
\ifthenelse{\equal{#1}{}}{p_{#2}}{p^{(#1)}_{#2}}}\xspace}
\newcommand{\EdgeProbP}[2][]{\ensuremath{%
\ifthenelse{\equal{#1}{}}{\hat{p}_{#2}}{\hat{p}^{(#1)}_{#2}}}\xspace}
\newcommand{\GRAPH}[1][]{\ensuremath{%
\ifthenelse{\equal{#1}{}}{G}{G^{(#1)}}}\xspace}
\newcommand{\Inter}[2][]{\ensuremath{%
\ifthenelse{\equal{#1}{}}{\alpha(#2)}{A^{(#1)}(#2)}}\xspace}
\newcommand{\EXPGRAPH}{\ensuremath{\hat{\mathcal{G}}}\xspace}
\newcommand{\NODES}[1][]{\ensuremath{%
\ifthenelse{\equal{#1}{}}{U}{U^{(#1)}}}\xspace}
\newcommand{\RRP}[1][]{\ensuremath{%
\ifthenelse{\equal{#1}{}}{\mathcal{R}}{\mathcal{R}^{(#1)}}}\xspace}
\newcommand{\Select}[2][]{\ensuremath{%
\ifthenelse{\equal{#1}{}}{z_{#2}}{z^{(#1)}_{#2}}}\xspace}
\newcommand{\ISelect}[2][]{\ensuremath{%
\ifthenelse{\equal{#1}{}}{\hat{z}_{#2}}{\hat{z}^{(#1)}_{#2}}}\xspace}
\newcommand{\SelectV}{\ensuremath{\bm{z}}\xspace}
\newcommand{\SSel}[1]{\ensuremath{x_{#1}}\xspace}
\newcommand{\ISSel}[1]{\ensuremath{\hat{x}_{#1}}\xspace}
\newcommand{\SSelV}{\ensuremath{\bm{x}}\xspace}
\newcommand{\Rev}[1][]{\ensuremath{%
\ifthenelse{\equal{#1}{}}{y}{y^{(#1)}}}\xspace}
\newcommand{\IRev}[1][]{\ensuremath{%
\ifthenelse{\equal{#1}{}}{\hat{y}}{\hat{y}^{(#1)}}}\xspace}
\newcommand{\RevV}{\ensuremath{\bm{y}}\xspace}
\newcommand{\Facebook}{\textit{Facebook}\xspace}
\newcommand{\DBLP}{\textit{DBLP}\xspace}
\newcommand{\NetHEPT}{\textit{NetHEPT}\xspace}
\newcommand{\Advogato}{\textit{Advogato}\xspace}
\newcommand{\Pokec}{\textit{Pokec}\xspace}
\newcommand{\Flixster}{\textit{Flixster}\xspace}
\newcommand{\Greedy}{\textit{Greedy}\xspace}
\newcommand{\LPRound}{\textit{LP-Rounding}\xspace}
\newcommand{\MaxDeg}{\textit{Max-Degree}\space}
\newcommand{\EigenCent}{\textit{Eigen-Centrality}\xspace}
\newcommand{\LPOPT}{\ensuremath{\mathrm{OPT}_{\mathrm{LP}}}\xspace}
\begin{document}
\title{Approximation Algorithms for Coordinating Ad Campaigns on
  Social Networks}
\author{Kartik Lakhotia \qquad David Kempe\\
University of Southern California\\
klakhoti@usc.edu, david.m.kempe@gmail.com}

\maketitle

\begin{abstract}
	We study a natural model of coordinated social ad campaigns over a
social network, based on models of Datta et al.~and Aslay et al.
Multiple advertisers are willing to pay the host --- up to a known budget ---
per user exposure, whether that exposure is sponsored or organic
(i.e., shared by a friend). 
Campaigns are seeded with sponsored ads to some users,
but no network user must be exposed to too many sponsored ads.
As a result, while ad campaigns proceed independently over the network,
they need to be carefully coordinated with respect to their seed sets.

We study the objective of maximizing the network's
total ad revenue.
Our main result is to show that 
under a broad class of social influence models, 
the problem can be reduced to maximizing a submodular function subject
to two matroid constraints; it can therefore be approximated within 
a factor essentially \half in polynomial time.
When there is no bound on the individual seed set sizes of
advertisers, the constraints correspond only to a single matroid,
and the guarantee can be improved to $1-1/\e$;
in that case, a factor \half is achieved by a practical greedy algorithm.
The $1-1/\e$ approximation algorithm for the matroid-constrained
problem is far from practical; however, we show that specifically
under the Independent Cascade model,
LP rounding and Reverse Reachability techniques can be combined
to obtain a $1-1/\e$ approximation algorithm which scales to several
tens of thousands of nodes.

Our theoretical results are complemented by experiments evaluating the
extent to which the coordination of multiple ad campaigns inhibits the
revenue obtained from each individual campaign, as a function of the
similarity of the influence networks and the strength of ties in the network.
Our experiments suggest that as networks for different advertisers
become less similar, the harmful effect of competition decreases.
With respect to tie strengths, we show that the most harm is done in
an intermediate range.

\end{abstract}

\section{Introduction} \label{sec:introduction}
Advertising is the most important and successful business model among
social network sites.
It is widely believed that the adoption of products has a significant
social component to it, wherein people recommend products to each
other, seek each other's opinion on products, or simply see others
use a product.
Under advertising campaigns on social networks,
users are shown some ads embedded in their newsfeed from friends;
the advertisers' hope is that the users will voluntarily share the
ads, or adopt the product and be observed by their friends using it.

While models and algorithms for viral marketing have a long history of
study (e.g., \cite{domingos:richardson,%
goldenberg:libai:muller:complex,%
goldenberg:libai:muller:talk,%
hartline:mirrokni:sundarajan,%
InfluenceSpread}, and
\cite{chen:lakshmanan:castillo:influence-maximization-book} for a survey),
the focus in all of these works (and a much larger body of literature)
has been on campaigns to maximize the reach of a single product.
When multiple products have been considered, it has typically been in
the context of competition, wherein each member of the network adopts
at most one product (e.g., \cite{dubey:garg:demeyer:competing,%
goyal:kearns:competitive,%
GoyalKearnsCompetitive});
though see \cite{lu:chen:lakshmanan:complementarity} for a model
wherein products may exhibit complementarities.

Another very important way in which multiple products' advertising
campaigns interact was articulated in two works
by Datta, Majumder, and Shrivastava~\cite{datta:majumder:shrivastava}
and by Aslay, Lu, Bonchi, Goyal, and Lakshmanan \cite{aslay2015viral}.
Both groups of authors observe that marketing campaigns interact in
that they involve displaying ads to the same set of users in a social network,
and users should not be shown too many ads,
lest they feel too inundated with ads and leave the social
network~\cite{lin2014steering}.%
\footnote{According to the 2012 Digital Advertising Attitudes
Report~\cite{DAA2012report},  
users may even exhibit negative responses towards products if they
receive too many advertisements.}

More precisely, both sets of authors assume (explicitly or implicitly)
that product exposure comes in two forms: sponsored or user-shared.
When a user's friend willingly shares an ad, or the user
observes a friend using a product, this is perceived as genuine content.
On the other hand, a sponsored ad shown by the network itself is
viewed as advertising.
Both Datta et al.~\cite{datta:majumder:shrivastava}
and Aslay et al.~\cite{aslay2015viral} thus impose a
constraint on the number of \emph{sponsored} ads that a user can be shown:
the number of sponsored ads shown to node $v$ must not exceed a given
bound \ExposureBound{v}.
While only sponsored ads are deemed ``annoying,''
advertisers do profit equally from sponsored or user-shared exposure.
Imposing tight constraints on the number of sponsored ads
  across ad campaigns,
  and solving the corresponding optimization problem,
  should thus reduce the exposure of users to annoying sponsored ads.
  If the coordinated optimization problem is solved well, then this
  reduction might be achieved without much reduction in network profit.%
\footnote{Whether users should be shown ads at all on social networks
  is an ethical and normative question beyond the scope of our work.
  We take no position on whether social networking platforms
  should be considered public utilities, financed by subscription
  fees, or by ad revenue. Our goal here is to limit users' exposure to
  sponsored ads while still providing high coverage to individual
  advertisers in the currently practiced model.}
  

Both Datta et al.~and Aslay et al.~study the network's (here also
called \emph{host}) goal of maximizing its own payoff (also called
\emph{revenue}), by carefully coordinating the ad campaigns of
multiple advertisers.
In both models, the revenue that can be obtained from an advertiser
$j$ is constrained by the advertiser's ``budget.''
In the model of Datta et al., the advertiser pays the host for each
user exposure (sponsored or user-shared);
however, the number of sponsored ads for advertiser $j$ is bounded by
some number \SBUDGET[j].
In the standard language of influence maximization (precise
definitions are given in Section~\ref{sec:problem-statement}),
this translates to constraints on the \emph{seed set sizes} of each
advertiser.
In contrast, Aslay et al.~assume that the advertiser's budget
constrains the total payment for exposures;
if an advertiser with budget \BUDGET[j] receives $N$ exposures and is
willing to pay \PER[j] per exposure, the network's revenue is
$\min(\BUDGET[j], \PER[j] \cdot N)$.

The advertisers' hard budget constraints in the model of Aslay et
al.~raise an interesting question: 
what happens if $\PER[j] \cdot N > \BUDGET[j]$, i.e.,
an advertising cascade reaches \emph{more} network users
than the advertiser is willing to pay for?
Aslay et al.~\cite{aslay2015viral} posit that the network views this
negatively, and incurs a \emph{regret} over giving the advertiser free
exposure.\footnote{A possible justification is that if this happens
  repeatedly, advertisers may learn to lower their declared budget and
  pay less.}
In their model, the network obtains regret for \emph{undershooting}
the budget as well, this time over lost potential revenue.
Their goal is then to minimize the total regret.
Because the regret is 0 only when the network exactly hits the
advertisers' budgets,
a straightforward reduction from \textsc{Partition} style problems
shows not only NP-hardness, but also hardness of any multiplicative
approximation.
Under the same model, Tang and Yuan~\cite{tang2016optimizing}
consider a modified payoff function instead of minimizing regret.
Their objective function consists of the net revenue generated (capped by the budget),
with an additional negative term for excess exposure; this term penalizes
overshooting.
Notice that over- or undershooting the budget by a small amount in
this model does not have as severe consequences for approximation
guarantees as it does for the regret objective.
As a result, Tang and Yuan achieve a $\frac{1}{4}$-approximation using a
greedy approach. 

In this article, like Tang and Yuan, we depart from the regret
objective of Aslay et al. 
In particular, we do not believe that a network incurs ``regret'' from
accidentally giving advertisers free exposure.%
\footnote{Such regret may well be a psychological reality for humans,
but it appears less applicable at the level of a major company.}
One justification for a negative term for overshooting is that
  advertisers may learn to underbid, waiting for the free higher exposure.
However, an equally strong case could be made that
giving advertisers free exposure is beneficial,
in that it makes the advertisers more likely to run future
ad campaigns on the site,
which in turn may lead to an increase, not decrease, in future revenue.
Once a regret for overshooting the advertiser's budget is not a
concern any more,
a much more natural objective is to maximize the network's total
revenue, which is the objective function we study here.
(A precise definition of the model is given in
Section~\ref{sec:problem-statement}.)

Our main result --- in Section~\ref{sec:general} --- is a general
treatment of optimization in this model, 
subsuming both versions of budgets.
We consider very general influence models
(potentially different for each advertiser),
and only require that the local influence function at every node $v$
for every individual advertiser $j$ be monotone submodular%
\footnote{Recall that a set function $f$ is \emph{submodular} iff
  $f(S \cup \SET{x}) - f(S) \geq f(T \cup \SET{x}) - f(T)$
  whenever $S \subseteq T$.}.
Under these assumptions, we show that 
the network's revenue can be approximated to within constant factors.
More specifically, when there are no constraints on individual
advertisers' seed set sizes (only budget constraints limiting the
total number of exposures the advertiser is willing to pay for),
then there is a polynomial-time $(1-1/\e)$-approximation algorithm,
and a simple greedy algorithm achieves a \half-approximation.
When in addition, there are constraints on the seed set sizes of
individual advertisers, we obtain a $\frac{1}{2+\epsilon}$
approximation, for every $\epsilon > 0$.

We show these results by expressing the optimization problem as a
monotone submodular maximization problem over suitably chosen
matroid domains, much in the style of Vondr\'{a}k et al.'s result
on welfare-maximizing partitions
\cite{calinescu:chekuri:pal:vondrak,vondrak:submodular-welfare}.
As observed by Tang and Yuan as well,
when there are no constraints on individual seed set sizes,
the constraints jointly define a matroid.
This allows us to bring to bear on the problem Vondr\'{a}k's
polynomial-time $(1-1/\e)$-approximation algorithm for maximizing a
submodular function subject to a matroid constraint
\cite{calinescu:chekuri:pal:vondrak,vondrak:submodular-welfare}
(see also subsequent work
\cite{chekuri:vondrak:zenklusen,gharan:vondrak:annealing,vondrak:approximability}),
and the much simpler \half-approximation algorithm due to Fischer et
al.~\cite{fischer:nemhauser:wolsey} (see also \cite{lehmann:lehmann:nisan}).
When there are additional constraints on individual seed set sizes, we show that
the feasible ad campaigns can be expressed as the intersection of
\emph{two} matroids, and the algorithm of
Lee et al.~\cite{lee:sviridenko:vondrak:multiple-matroids}
gives a $\frac{1}{2+\epsilon}$ approximation.

While we do not develop new algorithms or heuristics in this
part of our work, we consider it an important contribution to
explicitly reduce from advertiser competition constraints to the
intersections of matroids.
Such a characterization was missing from prior work, resulting in
heuristics with weaker guarantees and rederiving known arguments.
Our work builds on the work of Tang and Yuan and generalizes it to
provide a clean optimization framework that allows the
incorporation of different constraints in the future, and the
application of known powerful optimization techniques and
guarantees.

While Vondr\'{a}k's beautiful $(1-1/\e)$-approximation algorithm for
maximizing a submodular function under a matroid constraint
\cite{vondrak:submodular-welfare} runs in polynomial time,
it is not practical.%
\footnote{The running time is roughly $O(n^8)$, with large constants.
  To the best of our knowledge, the full algorithm has never been
  implemented, but we would be surprised if it scaled to more than 10 nodes.}
With this concern in mind, in Section~\ref{sec:IC},
we develop a $(1-1/\e)$-approximation algorithm which scales to
moderately sized networks comprising several tens of thousands of nodes.
Unlike our main results, this algorithm is not fully general,
providing guarantees only for the Independent Cascade (IC) Model,
and satisfying the budget constraint only in expectation,
rather than for each solution.
Networks of tens of thousands of nodes are of practical interest:
they are often encountered in area-aware advertising \cite{HPLOCAL2018, FBREACH}. 
Another natural domain with small networks is targeted health and 
social intervention programs \cite{WOHLWPWTR}, where network sizes are often only in 
the hundreds or thousands; it is very plausible that a principal may want to coordinate 
multiple interventions, such as safe drug usage, safe sex practices, and others.

The reason that the algorithm only works for the IC model
is that it is based on randomly sampling Reverse
Reachable (RR) sets~\cite{borgs:brautbar:chayes:lucier,%
tang:shi:xiao:martingale,tang:xiao:shi:near-optimal}.
These RR sets are then used to define and solve
a generalized Maximum Coverage LP,
which is then rounded using techniques of Gandhi et
al.~\cite{gandhi:khuller:parthasarathy:srinivasan:dependent}.
An additional important benefit of the algorithm is that the 
fractional LP which it rounds provides an upper bound on the
performance of the \emph{optimum} solution.
This upper bound is useful in our experimental evaluation,
not only of the LP rounding algorithm, but of other algorithms as well.

In Section~\ref{sec:experiments}, we describe experiments to evaluate
the effects of competition and tie strengths on coordinated
advertising campaigns.
A comparison of algorithms shows that while the greedy algorithm has a
worse worst-case approximation guarantee than LP rounding, it performs
(marginally) better in experiments; both algorithms beat several
natural heuristics, and get within 85\% of the LP-based upper bound on the
optimum; this is significantly better than their worst-case guarantees.
We also develop a parallel version of the Greedy algorithm that 
accelerates overall processing by a factor of 12 on a 36-core machine.
This allows us to scale to graphs with
millions of nodes and dozens of advertisers.

Since approximation algorithms for ``standard'' Influence Maximization
have been extensively studied in the literature,
our main interest in the experimental evaluation is the effect of
competition between products, and the interaction between competition
and tie strengths in the network.
Our experiments show that the more similar the advertisers'
influence networks, the more the host's payoff decreases compared to
the sum of what could be extracted from the advertisers in isolation.
This effect is exacerbated when the network has a small number of
highly influential nodes, since each advertiser will only be able to
target few of these nodes.
When nodes' influence is more even across the network,
the host loses less revenue, because there are enough different
``parts'' of the network to extract revenue from all advertisers.
When ties are very weak or very strong, the effect of competition
is again attenuated, while for intermediate tie strengths,
competition can lead to a significant revenue loss compared to
treating each advertiser separately.

\section{Problem Statement} \label{sec:problem-statement}
There are $m$ advertisers $j=1, \ldots, m$ aiming to advertise on a
network of $n$ nodes/individuals $V = \SET{1, \ldots, n}$.
We will use $u$ and $v$ (and their variants) to denote nodes,
and $j$ (and variants) exclusively for advertisers.

Product information (or ads or influence) propagates through the
social network according to the \emph{general threshold model}
\cite{InfluenceSpread,mossel:roch:submodular}, defined as follows:
For each node $v$ and advertiser $j$,
there is a monotone and submodular \emph{local influence function}
$\INFFUN[j]{v} : V \to [0,1]$ with $\INFFUN[j]{v}(\emptyset) = 0$.
Each node $v$ independently chooses thresholds
$\Thold[j]{v} \in [0,1]$ uniformly at random for each product $j$.
Let \Active[j]{t} be the set of nodes that have shared ad $j$ by round $t$.
(We call such nodes \emph{active for ad $j$}.)
Node $v$ becomes active for ad $j$ in round $t+1$ iff
$\InfFun[j]{v}{\Active[j]{t}} \geq \Thold[j]{v}$.
The (random) process is seeded with \emph{seed sets} \Active[j]{0}
(subject to constraints discussed below).
It quiesces when no new activations occur in a single round.
At that point, for each ad $j$, a final (random) set \FinalActive[j]
has become active.
The general threshold model subsumes most standard models of influence
spread, including the Independent Cascade and Linear Threshold models.
Notice that
(1) the diffusions for different $j$ proceed independently,
except for joint constraints on the seed sets (discussed below), and
(2) different ads can in principle follow different diffusion models.

Each advertiser $j$ has a non-negative and monotone
value function \VALFUN[j] giving the payment that $j$ will make
to the network site as a function of the active nodes in the end.
In full generality, this function could depend on the (random) \emph{set}
\FinalActive[j] of nodes that are active for $j$ in the end;
in this case, we will assume that \VALFUN[j] is also submodular.
In most cases, however, all nodes in the network will have the same
value to advertiser $j$, in which case \VALFUN[j] will only depend on
the cardinality $|\FinalActive[j]|$.
In this case, we will assume that \VALFUN[j] is (weakly) concave;
notice that this is a special case of the previous one, because viewed
as a function of \FinalActive[j] (rather than $\SetCard{\FinalActive[j]}$),
the function \VALFUN[j] is submodular.
For notational convenience, and in keeping with much of the prior
literature, we write
$\FinalSize[j]{\Active[j]{0}} = \SetCard{\FinalActive[j]}$
for the (random) number of nodes that are active in the end when
the process starts with the node set \Active[j]{0}.

A particularly natural case
--- closest to the definition of Aslay et al.~\cite{aslay2015viral} --- 
is when $\ValFun[j]{x} = \min(\BUDGET[j], \PER[j] \cdot x)$.
Here, \BUDGET[j] is the advertiser's budget,
while \PER[j] is the amount he is willing to pay per exposure.

The network's goal is to choose seed sets \Active[j]{0} for all
advertisers $j$,
subject to additional constraints discussed below,
so as to maximize one of the following two global revenue functions:

\begin{align}
\EMObj{\Active[1]{0}, \Active[2]{0}, \ldots, \Active[m]{0}}
  & = \sum_{j=1}^m \Expect{\ValFun[j]{\FinalSize[j]{\Active[j]{0}}}},
\label{eqn:expect-outside}
\\
\MEObj{\Active[1]{0}, \Active[2]{0}, \ldots, \Active[m]{0}}
& = \sum_{j=1}^m \ValFun[j]{\Expect{\FinalSize[j]{\Active[j]{0}}}}.
\label{eqn:expect-inside}
\end{align}

Notice the subtle difference between the two definitions:
\EMOBJ captures the expected revenue from advertisers who are charged
for each campaign individually, according to their functions \VALFUN[j].
The objective \EMOBJ extends straightforwardly to the case where \VALFUN[j]
depends not only on the cardinality of \FinalActive[j],
but on the specific set.
In contrast, \MEOBJ corresponds to the case in which advertisers
are charged according to the expected exposure.
An essentially equivalent way of expressing the same objective
(up to random noise) is that the revenue is based on the
\emph{average} exposure for advertiser $j$ over a large number of
campaigns.

Specifically in the context of the linear revenue function with a cap
of the budget \BUDGET[j],
this means that under \EMOBJ, advertiser $j$ is \emph{never} charged more
than \BUDGET[j] for any individual campaign,
while under \MEOBJ, advertiser $j$'s budget is not exceeded on
average over multiple campaigns.

\smallskip

The main constraint, common to both the models of Aslay et al.~and
Datta et al., is that each node can be exposed to only a limited
number of sponsored ads.
We interpret this as saying that for node $v$, there is an upper bound
\ExposureBound{v} on the number of seed sets it can be contained in, i.e.,
$\SetCard{\Set{j}{v \in \Active[j]{0}}} \leq \ExposureBound{v}$.%
\footnote{This model in fact subsumes one in which
each sponsored ad exposure only activates $v$ with some probability $p^{(j)}_{v}$:
add a new \emph{ad node} $v'$ with activation function
$\INFFUN[j]{v'} \equiv 0$,
and set $\InfFunP[j]{v}{S \cup \SET{v'}} = 1-(1-p^{(j)}_{v})(1-\InfFun[j]{v}{S})$.
Then set $\ExposureBound{v} = 0$, so only the new ``ad nodes'' can be targeted.}

In addition to the node exposure constraints \ExposureBound{v},
following the model of Datta et al., we also allow for constraints on
the seed set sizes of each advertiser:
for each advertiser $j$, the number of seed nodes $\SetCard{\Active[j]{0}}$
cannot exceed \SBUDGET[j].
In addition, we can constrain the total number of sponsored ads (more tightly):
$\sum_j |\Active[j]{0}| \leq \SBUDGET$.
In summary, our target optimization problem is the following:

\begin{definition}[Multi-Product Influence Maximization]
  For each advertiser $j$, choose a seed set \Active[j]{0} with
  $\SetCard{\Active[j]{0}} \leq \SBUDGET[j]$, such that
  $\SetCard{\Set{j}{v \in \Active[j]{0}}} \leq \ExposureBound{v}$.

   Subject to these constraints, maximize
  $\EMObj{\Active[1]{0}, \Active[2]{0}, \ldots, \Active[m]{0}}$
  or $\MEObj{\Active[1]{0}, \Active[2]{0}, \ldots, \Active[m]{0}}$.
\end{definition}

The Multi-Product Influence Maximization problem is clearly NP-hard,
as it subsumes the standard influence maximization problem.

\section{A General Result} \label{sec:general}
We begin with a very general treatment.
Under the assumptions we make
(the \VALFUN[j] are submodular functions of \FinalActive[j],
or concave functions of $\SetCard{\FinalActive[j]}$),
the overall objective functions \EMOBJ and \MEOBJ are both submodular.
In the case of \EMOBJ, this is because a non-negative linear combination
of submodular functions
(in particular: a convex combination of submodular functions)
is submodular.
In the case of \MEOBJ, the reason is that
$\SetCard{\FinalActive[j]}$ is a monotone submodular function of
\FinalActive[j], 
and applying a monotone concave function preserves submodularity.
In both cases, we can therefore apply the result of Mossel and
  Roch~\cite{mossel:roch:submodular}, which guarantees submodularity
  of the objective for each advertiser, as a function of the seed set.

To deal with the constraints on node exposures and seed set sizes,
we use a technique discussed in
\cite{calinescu:chekuri:pal:vondrak,vondrak:submodular-welfare}
in the context of finding welfare-maximizing assignments of items to
individuals, and also implicit in the work of Tang and Yuan for
  coordinating multiple advertisers \cite{tang2016optimizing}.
We create a disjoint union of separate networks
(each with the original $n$ nodes)
for each advertiser $j=1, \ldots, m$,
and then consider joint constraints on seed sets that can be selected
across all of the separate networks.
The overall new network \EXPGRAPH has $nm$ nodes,
one node $u_{v,j}$ for each combination
of an original node $v$ and advertiser $j$.
Let $\NODES[j] = \Set{u_{v,j}}{v \in V}$ denote the set of new nodes
for advertiser $j$,
$X_v := \Set{u_{v,j}}{j=1, \ldots, m}$ the set of new nodes
corresponding to the node $v$, an
and $\NODES = \bigcup_j \NODES[j] = \bigcup_v X_v$.
The influence function for node $u_{v,j}$ is
$\InfFun{u_{v,j}}{S} := \InfFun[j]{v}{\Set{v'}{u_{v',j} \in S}}$,
and thus only depends on the nodes in the network for advertiser $j$.
Writing $\FinalActive \subseteq \NODES$ for the (random) final set of
active nodes in \EXPGRAPH, the objective function is 
$\EMObj{\Active{0}} = \sum_j \ValFun[j]{\Abs{\Set{v}{u_{v,j} \in \FinalActive}}}$,
which is submodular by \looseness=-1construction.
  
Targeting the node $u_{v,j}$ in \EXPGRAPH corresponds to exposing node
$v$ to ad $j$ in the original problem.
In this way, the problem is simply to choose a subset $S \subseteq \NODES$.
The correspondence between seed sets of \EXPGRAPH and ad seeding
choices in the original problem is that
$\Active[j]{0} = \Set{v}{u_{v,j} \in \Active{0}}$.
The constraints on the selection of \Active{0} are then that
$\SetCard{\Set{j}{u_{v,j} \in \Active{0}}} \leq \ExposureBound{v}$,
$\SetCard{\Active[j]{0}} \leq \SBUDGET[j]$
(as well as $\SetCard{\Active{0}} \leq \SBUDGET$).

Because the sets \NODES[j] form a disjoint partition of \NODES,
the restriction that at most \SBUDGET[j] nodes from \NODES[j] may be
selected defines a partition matroid.%
\footnote{Readers unfamiliar with the standard definitions of
matroids and (truncated) partition matroids are directed to
Appendix~\ref{sec:matroid-definitions} or \cite{oxley:matroid-theory}.}
Similarly, because the sets $X_v$ form a partition,
the constraint that at most \ExposureBound{v} nodes from $X_v$ may be
selected defines a different partition matroid.
The constraint on the total number of selected nodes can be added to
either of these partition matroids, and turns it into a
truncated partition matroid.
The fact that the exposure constraints and an overall
constraint on the total number of seeds together define the independent sets
of a matroid, was also proven by Tang and Yuan \cite{tang2016optimizing} (Lemma~3).
%
%
%
Thus, a consequence of this reduction to the optimization problem
on \EXPGRAPH is that the constraints on \Active{0} form the
intersection of two matroids:
a partition matroid on the sets \NODES[j],
and a truncated partition matroid defined by the sets $X_v$ and the
overall constraint of \SBUDGET on the seed set size.
The optimization goal is to select a set \Active{0},
subject to these two matroid constraints, 
so as to maximize a non-negative, monotone, submodular function.
The latter is a well-studied problem.
The key results for our purposes are the \looseness=-1following:

\begin{theorem}[Theorem 3.1 of \cite{lee:sviridenko:vondrak:multiple-matroids}]
\label{thm:submodular-multiple-matroids}
There is a simple local search algorithm for maximizing a non-negative
non-decreasing submodular function subject to $k \geq 2$ matroid constraints.
For any $\epsilon > 0$, the algorithm (with suitable termination
condition) provides a polynomial-time $\frac{1}{k+\epsilon}$-approximation.
\end{theorem}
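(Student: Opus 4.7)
The plan is to prove the theorem via a local-search algorithm whose neighborhood is a $p$-swap: delete up to $p$ elements of the current feasible set $S$ and insert a new element, subject to the result remaining independent in all $k$ matroids $\mathcal{M}_1, \ldots, \mathcal{M}_k$. I would start from any maximal feasible set (for instance, one produced by greedy) and repeatedly apply an improving $p$-swap or a profitable single-element deletion. To guarantee polynomial termination, I would only accept moves that improve $f$ by at least a multiplicative factor of $(1 + \delta/n^c)$ for a suitably chosen constant $c$ and $\delta$ depending on $\epsilon$; this turns the notion of a ``local optimum'' into an ``approximate local optimum,'' which is what the analysis really needs.

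For the approximation bound, at a near-local optimum $S$, I would compare against an unknown optimum $O$ using a matroid exchange argument. The key combinatorial fact is that for each matroid $\mathcal{M}_i$ there is an exchange mapping (or, in the $p$-swap regime, a fractional exchange decomposition) that pairs up elements of $S \setminus O$ with elements of $O \setminus S$ in a feasibility-preserving way. Summing the resulting non-improvement inequalities over all $k$ matroids and all exchange pairs, and using submodularity of $f$ together with monotonicity, should yield a bound of the form $(k+1) \cdot f(S) \geq f(O)$ up to slack terms, which rearranges to $f(S) \geq \frac{1}{k+\epsilon} f(O)$. The slack comes from the $(1+\delta/n^c)$ acceptance threshold and the granularity of $p$-swaps, and both are controlled by the choice of $p = p(\epsilon)$ and $\delta = \delta(\epsilon)$.

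The running-time analysis is comparatively routine: the multiplicative acceptance threshold ensures at most $O(\epsilon^{-1} n^c \log f_{\max})$ improving steps before termination, since the objective grows geometrically and is bounded above by $n \cdot f(\{v^{\ast}\})$ for the best singleton $v^{\ast}$. Each step examines $O(n^{p+1})$ candidate moves and evaluates $f$ on each, which is polynomial for any constant $p$. Combining, the total work is polynomial in $n$, $1/\epsilon$, and the cost of a single evaluation of $f$.

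The main obstacle I expect is the matroid exchange argument for $k \geq 2$. For a single matroid, the Brualdi exchange bijection immediately delivers a $\frac{1}{2}$-type bound via single-element swaps. For $k$ matroids, naively combining $k$ separate bijections double-counts elements of $S \setminus O$ and loses an extra factor. The technical fix is to use $p$-swaps together with a decomposition (in the spirit of Lee, Sviridenko, and Vondr\'ak) that fractionally distributes the exchanges so that each element of $S \setminus O$ carries total charge at most $k$ while each element of $O \setminus S$ is covered to within a $1 - O(1/p)$ factor. Constructing this decomposition in a way that interacts correctly with submodularity (rather than just linearity) of $f$, and then verifying that the error terms telescope into the $\epsilon$ slack, is the delicate step where most of the real work lies.
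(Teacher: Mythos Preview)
The paper does not prove this theorem; it merely quotes it as Theorem~3.1 of Lee, Sviridenko, and Vondr\'ak and invokes it as a black box. So there is no ``paper's own proof'' to compare your proposal against. If your goal was to reconstruct the Lee--Sviridenko--Vondr\'ak argument, your outline is broadly in the right spirit (local search with $p$-swaps, approximate local optimality for polynomial termination, and a matroid exchange argument summed across the $k$ matroids), but there is a genuine gap in the approximation analysis as you have written it.

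Specifically, you claim the exchange argument yields ``a bound of the form $(k+1)\cdot f(S) \ge f(O)$ up to slack terms, which rearranges to $f(S) \ge \frac{1}{k+\epsilon} f(O)$.'' This is not consistent: $(k+1)\,f(S) \ge f(O)$ only gives a $\frac{1}{k+1}$-approximation, which is exactly the classical Fisher--Nemhauser--Wolsey greedy bound (Theorem~\ref{thm:submodular-greedy} in the paper) and not what the theorem asserts. The entire technical contribution of Lee--Sviridenko--Vondr\'ak is to go from $k+1$ down to $k+\epsilon$, and this requires the $p$-swap exchange decomposition to yield a coefficient of $k + O(1/p)$ on $f(S)$ rather than $k+1$. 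You allude to this in your final paragraph (``each element of $O \setminus S$ is covered to within a $1 - O(1/p)$ factor''), but your main argument paragraph states the wrong inequality and then asserts it gives the right conclusion. The actual proof needs to show that, at an approximate local optimum with respect to $p$-swaps, one obtains roughly $(k + \tfrac{k}{p-1})\,f(S) \ge f(O)$, and then choose $p$ large enough in terms of $\epsilon$. Making this precise for submodular (not linear) $f$ is where the real work is, as you correctly note, but the target inequality must be stated correctly for the argument to close.
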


\begin{theorem}[\cite{fischer:nemhauser:wolsey}]
\label{thm:submodular-greedy}
The greedy algorithm, which always adds the next element maximizing the
increase in the objective function
(subject to not violating the matroid constraint),
is a $\frac{1}{k+1}$ approximation algorithm for maximizing a
non-negative non-decreasing submodular function
subject to $k$ matroid constraints.
\end{theorem}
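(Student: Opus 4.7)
The plan is to establish the slightly stronger statement that $f(O) \leq (k+1) f(G)$ for the greedy output $G$ and any feasible $O$ (in particular the optimum), from which the $\frac{1}{k+1}$-approximation guarantee is immediate. Write $G = \{g_1, g_2, \ldots, g_p\}$ in the order in which greedy selected these elements, and let $G_i = \{g_1, \ldots, g_i\}$ with $G_0 = \emptyset$.

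Two ingredients are combined. The first is a purely submodular estimate: monotonicity and submodularity give
\begin{align*}
f(O) \;\leq\; f(O \cup G) \;\leq\; f(G) + \sum_{o \in O \setminus G} \bigl[f(G \cup \{o\}) - f(G)\bigr],
\end{align*}
so it suffices to bound the right-hand sum by $k \cdot f(G)$. The second is a matroid-exchange charging scheme: each $o \in O \setminus G$ is assigned to some greedy step $i(o)$ such that (a) $G_{i(o)-1} \cup \{o\}$ lies in the intersection of all $k$ matroids (so $o$ was a legal greedy candidate at that step), and (b) every step $i$ receives at most $k$ assignments. Given such a charging, the greedy selection rule yields $f(G_{i(o)}) - f(G_{i(o)-1}) \geq f(G_{i(o)-1} \cup \{o\}) - f(G_{i(o)-1})$, and submodularity (diminishing marginals, since $G_{i(o)-1} \subseteq G$) yields $f(G_{i(o)-1} \cup \{o\}) - f(G_{i(o)-1}) \geq f(G \cup \{o\}) - f(G)$. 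Summing over $o \in O \setminus G$ and regrouping by step $i$ gives $\sum_o [f(G \cup \{o\}) - f(G)] \leq k \sum_i [f(G_i) - f(G_{i-1})] = k \cdot f(G)$, as required.

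The main obstacle is the charging construction itself. For each individual matroid $M_\ell$, the classical basis-exchange lemma supplies a bijection from $O \setminus G$ into $G \setminus O$ that swaps one element at a time while preserving independence in $M_\ell$; augmenting $O$ or $G$ to a common basis (with dummy zero-value elements if needed) ensures that this bijection is total. The nontrivial step is to combine these $k$ per-matroid bijections coherently with the greedy timeline so that a single step $i(o)$ can be chosen for each $o$ satisfying (a) and (b) simultaneously across all constraints. For $k = 1$ this reduces to an easy pairing argument, but for $k \geq 2$ one must argue, for example by induction on the greedy order together with a Hall-type matching, that the per-matroid exchanges can be reconciled so that no greedy step is overloaded beyond $k$ charges. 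This reconciliation is the technical core of the Fisher--Nemhauser--Wolsey analysis; once it is in place, the submodular bookkeeping above is essentially mechanical.
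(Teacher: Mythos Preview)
The paper does not supply its own proof of this theorem; it is quoted as a known result from Fisher, Nemhauser, and Wolsey and used as a black box throughout. There is therefore no in-paper argument to compare against.

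For what it is worth, your outline matches the standard proof: the submodular bookkeeping is correct, and given a charging with properties (a) and (b) the chain of inequalities telescopes to $f(O)\le(k+1)f(G)$ exactly as you claim. One caveat on your sketch of the charging itself: per-matroid basis-exchange injections from $O\setminus G$ into $G\setminus O$ need not exist, since $G$ is maximal only in the \emph{intersection}, not in any single $M_\ell$; padding to a basis of $M_\ell$ can send elements of $O$ to dummy elements rather than to actual greedy steps, so the ``combine $k$ bijections via Hall'' plan does not obviously go through. The classical construction instead assigns each $o\in O\setminus G$ to the \emph{latest} step $i$ at which $G_{i-1}\cup\{o\}$ is still feasible, and then proves the prefix bound $|O_1|+\cdots+|O_i|\le ki$ directly from the $k$-system property of the matroid intersection (applied to $G_i$ together with those $o$ for which $G_i\cup\{o\}$ is already infeasible, a set in which $G_i$ is a maximal independent subset). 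One then either redistributes to get $|O_i|\le k$, or combines the prefix bound with the fact that greedy marginals are non-increasing via an Abel-summation step. With that substitution for the Hall-matching sketch, the rest of your argument goes through verbatim.
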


Datta et al.~prove an approximation guarantee of $1/3$ for a greedy
hill climbing algorithm;
as discussed above, they do not consider budget constraints on the
total number of exposures.
Furthermore, they consider only the special case when all advertisers
have the same influence functions at all nodes.
Their proof shows that the constraints form a $p$-system for $p=2$
(a generalization of the intersection of two matroids);
they then invoke an analysis of Calinescu et
al.~\cite{calinescu:chekuri:pal:vondrak} for such systems.
The same result can be obtained by appealing to
Theorem~\ref{thm:submodular-greedy} instead. 
By making the connection to 
the intersection of matroids
explicit\footnote{Datta et al.~also discuss matroids, though
	mostly to remark that the constraints do not form a matroid, and thus
	to motivate an analysis in terms of $p$-systems.}
and invoking Theorem~\ref{thm:submodular-multiple-matroids} instead,
we improve the approximation guarantee for this problem to essentially
\half, under a much more general problem setting.%
\footnote{It should be noted that the analysis of Datta et al.~is not
	really specific to the model they formulate,
	and could easily be extended to the general problem statement from
	Section~\ref{sec:problem-statement}.}

As opposed to the model of Datta et al.,
the model of Aslay et al.~does not impose constraints 
on the seed set sizes of individual advertisers;
it only restricts exposure per node and the overall seed set size.
As discussed above, these constraints together
form a \emph{single} truncated partition matroid
(as opposed to the intersection of \emph{two} matroids).
Thus, as observed by Tang and Yuan,
the greedy algorithm guarantees a
$\frac{1}{2}$-approximation for the revenue maximization problem
(without a penalty for overshooting), 
simply by invoking Theorem~\ref{thm:submodular-greedy}
(Lemma 4 in \cite{tang2016optimizing}).


	
In fact, in this simplified model without advertiser-specific seed
set constraints,
the approximation guarantee can be improved
to $(1-1/\e)$ by appealing to the following theorem of Vondr\'{a}k et al.

\begin{theorem}[\cite{calinescu:chekuri:pal:vondrak,vondrak:submodular-welfare}]
  \label{thm:continuous-greedy}
The continuous-greedy algorithm of
\cite{calinescu:chekuri:pal:vondrak,vondrak:submodular-welfare}
is a polynomial-time $(1-1/\e-\epsilon)$-approximation
for maximizing a non-negative monotone submodular function
subject to any matroid constraint,
for arbitrarily small $\epsilon > 0$.
\end{theorem}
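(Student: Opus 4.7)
The plan is to invoke the standard \emph{continuous greedy plus pipage rounding} pipeline of Calinescu--Chekuri--P{\'a}l--Vondr{\'a}k, which works for any monotone submodular function $f : 2^\NODES \to \mathbb{R}_{\geq 0}$ subject to a matroid $\mathcal{M}$ on ground set $\NODES$. The first step is to lift $f$ to its \emph{multilinear extension} $F : [0,1]^\NODES \to \mathbb{R}_{\geq 0}$, defined by $F(\mathbf{y}) = \mathbb{E}[f(R(\mathbf{y}))]$, where $R(\mathbf{y})$ is the random set that independently includes each element $u$ with probability $y_u$. Submodularity of $f$ implies that $F$ is concave along every non-negative direction, and monotonicity of $f$ implies $\partial F/\partial y_u \geq 0$ everywhere.

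Next, I would analyze the continuous greedy process on the matroid polytope $P(\mathcal{M}) = \mathrm{conv}(\{\mathbf{1}_I : I \text{ is independent in } \mathcal{M}\})$. Starting from $\mathbf{y}(0) = \mathbf{0}$, at each time $t \in [0,1]$ the process moves in the direction $\mathbf{v}(t) \in P(\mathcal{M})$ that maximizes $\mathbf{v}(t) \cdot \nabla F(\mathbf{y}(t))$; because the objective is linear in $\mathbf{v}$, this maximum is attained at a vertex of $P(\mathcal{M})$ and can be computed by a single greedy call over the matroid. Letting $\mathbf{1}_{\OPT}$ be the indicator of any optimal independent set, the key inequality is
\begin{equation*}
\frac{dF(\mathbf{y}(t))}{dt} \;=\; \mathbf{v}(t) \cdot \nabla F(\mathbf{y}(t)) \;\geq\; \mathbf{1}_{\OPT} \cdot \nabla F(\mathbf{y}(t)) \;\geq\; f(\OPT) - F(\mathbf{y}(t)),
\end{equation*}
where the last step uses monotonicity and submodularity of $F$ (specifically, $F(\mathbf{y} \vee \mathbf{1}_{\OPT}) \geq f(\OPT)$ together with the gradient bound). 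Solving this differential inequality on $[0,1]$ with $F(\mathbf{y}(0)) = 0$ yields $F(\mathbf{y}(1)) \geq (1 - 1/\e) \, f(\OPT)$. A standard discretization argument (with step size polynomially small in $n$ and $\epsilon$) turns this into a polynomial-time procedure that loses only an additive $\epsilon \cdot f(\OPT)$.

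Finally, I would round the fractional point $\mathbf{y}(1) \in P(\mathcal{M})$ to an integral independent set using \emph{pipage rounding} (or equivalently \emph{swap rounding}) on the matroid polytope. The key properties needed are that (i) the rounding returns a basis of a face of $P(\mathcal{M})$ corresponding to an independent set, and (ii) the expected value of $f$ on the rounded set is at least $F(\mathbf{y}(1))$; the latter follows from the concavity of $F$ along the swap directions used in each rounding step. Chaining the three bounds gives the claimed $(1 - 1/\e - \epsilon)$ guarantee.

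The main obstacle in carrying this out from scratch is not the ODE analysis, which is essentially a one-line Gr{\"o}nwall-type argument, but rather (a) controlling the discretization error of the continuous greedy so that a polynomial number of steps suffices while still estimating each gradient $\nabla F$ by sampling to sufficient accuracy, and (b) verifying that pipage/swap rounding truly preserves the multilinear extension value in expectation on a general matroid polytope. Both are handled in detail in the cited works, so in our setting it is enough to verify that our objective \EMOBJ (resp.\ \MEOBJ) is monotone submodular and that the feasibility constraints form a single matroid, both of which were already established earlier in this section; the theorem then follows immediately by applying the above pipeline as a black box.
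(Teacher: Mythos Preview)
Your sketch is a correct outline of the continuous-greedy plus pipage/swap rounding argument from \cite{calinescu:chekuri:pal:vondrak,vondrak:submodular-welfare}. However, note that the paper does \emph{not} prove this theorem at all: it is stated purely as a cited black-box result and invoked without any accompanying argument. So there is no ``paper's own proof'' to compare against; you have supplied substantially more than the paper does, essentially summarizing the proof from the cited references. Your final paragraph correctly identifies how the theorem is actually \emph{used} in the paper --- namely, after verifying that \EMOBJ (or \MEOBJ) is monotone submodular and that the node-exposure plus total-seed constraints form a single truncated partition matroid, one simply applies Theorem~\ref{thm:continuous-greedy} off the shelf.
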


Unfortunately, while the continuous-greedy algorithm runs in
polynomial time, it is not practical.

The main message of this section is that, when focusing on the
arguably more natural objective of maximizing the network's revenue
(rather than minimizing regret),
non-trivial (and in some cases well-established)
algorithmic techniques can be leveraged to obtain algorithms with
approximation guarantees essentially matching those of standard
influence maximization.
These guarantees apply in a wide variety of influence models (allowing
for completely different influence functions for different advertisers),
and under a variety of constraints about the seed sets for different
advertisers.
In fact, by leveraging subsequent work on submodular maximization
under multiple Knapsack and matroid constraints (see, e.g.,
\cite{chekuri:vondrak:zenklusen,gharan:vondrak:annealing,%
lee:sviridenko:vondrak:multiple-matroids,vondrak:approximability}),
one can also obtain (somewhat weaker) approximation guarantees when
different seed nodes have different costs (in terms of money or
effort) for targeting.

\section{An LP-rounding Algorithm for the Independent Cascade Model} \label{sec:IC}
In this section, we focus on the special case of the Independent
Cascade (IC) Model for influence in a network,
and a budgeted linear valuation function:
the payoff the host receives from advertiser $j$ is
$\ValFun[j]{\FinalActive[j]}
= \min(\BUDGET[j], \PER[j] \cdot \SetCard{\FinalActive[j]})$.
The IC Model for influence has been widely studied;
a reader wishing to briefly review the definition may find
a description in Appendix~\ref{sec:IC-definition}.
In talking about the model, we use \EdgeProb[j]{e} to denote
the probability that an edge $e=(u,v)$ activates node $v$ for product
$j$ when node $u$ has become active for product $j$.

We use the idea of Reverse Reachability sets
\cite{borgs:brautbar:chayes:lucier,%
tang:shi:xiao:martingale,%
tang:xiao:shi:near-optimal}
to reduce the multi-advertiser campaign coordination problem to a
generalized Maximum Coverage problem;
we then show how to round the LP, using an algorithm of Gandhi et
al.~\cite{gandhi:khuller:parthasarathy:srinivasan:dependent},
to obtain a polynomial-time and reasonably practical
$(1-1/\e)$-approximation algorithm for the problem with both node and
advertiser constraints.

This approximation guarantee gives an improvement over the
\half-approximation guarantee of the local
search algorithm (Theorem~\ref{thm:submodular-multiple-matroids}).
When the seed set sizes of individual advertisers are not restricted, 
the continuous greedy algorithm of Theorem~\ref{thm:continuous-greedy}
matches the LP rounding algorithm's $(1-1/\e)$ guarantee,
but the continuous greedy algorithm is completely impractical.
The LP rounding  algorithm is less efficient than the simple greedy
algorithm from Theorem~\ref{thm:submodular-greedy},
but provides better approximation guarantees.

The downside, compared to the more general treatment in
Section~\ref{sec:general}, is that the results only hold for the
IC Model, and the budget constraints on the number of
exposures will only be satisfied in expectation:
in any one run, the algorithm may charge an advertiser $j$
(significantly) more than \BUDGET[j].

\subsection{The Reverse Reachability Technique}
A very useful alternative view of the Independent Cascade Model was
first shown in \cite{InfluenceSpread}, and heavily used in subsequent
work: generate graphs \GRAPH[j] by including each edge $e$ in
\GRAPH[j] independently with probability \EdgeProb[j]{e}.
Then, the distribution of nodes activated by ad $j$ in the end,
when starting from the set \Active[j]{0},
is the same as the distribution of nodes reachable from \Active[j]{0} in
the random graph \GRAPH[j].

This alternative view forms the basis of the
Reverse Reachability Set Technique, first proposed and analyzed by
Borgs, Brautbar, Chayes, and Lucier \cite{borgs:brautbar:chayes:lucier},
and further refined by Tang, Xiao, and Shi
\cite{tang:shi:xiao:martingale,tang:xiao:shi:near-optimal}.
The primary goal of 
\cite{borgs:brautbar:chayes:lucier,tang:shi:xiao:martingale,tang:xiao:shi:near-optimal}
was to permit a more efficient evaluation of the objective
\FinalSize{\Active{0}}.
Under both the Independent Cascade and Linear Threshold Models,
evaluating \FinalSize{\Active{0}} is known to be \#P-complete
\cite{chen:yuan:zhang:scalable,chen:wang:wang:prevalent}.
Reverse Reachability sets permit a more efficient
(both theoretically and practically) approximate evaluation,
as compared to the obvious Monte Carlo simulation.

We explain the Reverse Reachability technique for a single ad, and
omit the index $j$ for readability for now.
The key insight is the following:
let $v$ be an arbitrary node, and consider the (random) set $R$ of all
nodes that can reach $v$ in the randomly generated graph \GRAPH.
Then, the probability that $v$ is activated starting from
\Active{0} is equal to the probability that
$\Active{0} \cap R \neq \emptyset$.
(See, e.g., \cite[Lemma 2]{tang:xiao:shi:near-optimal}.)

More generally,
if we draw $\rho$ such sets $R_1, \ldots, R_\rho$ independently,
and for independently uniformly random target nodes $v$, 
and \Active{0} intersects \Inter{\Active{0}} of them,
then $\frac{n \cdot \Inter{\Active{0}}}{\rho}$ is an unbiased
estimator of the expected number of nodes activated 
when starting from \Active{0}.
In order to leverage this insight for computational savings in
computing ad campaigns, it is important that the estimate be
sufficiently accurate with sufficiently high probability, 
for a small enough number $\rho$ of reverse reachable sets.
A characterization of this accuracy is given by the following lemma
from \cite{tang:xiao:shi:near-optimal} (restated slightly here):

\begin{lemma}[Lemma 3 of \cite{tang:xiao:shi:near-optimal}]
\label{lem:reverse-reachability}
Assume that
\begin{align}
  \rho & \geq (8+2\epsilon) \cdot n \cdot
         \frac{c \log n + \log {n \choose k} + \log 2}{\OPT \cdot \epsilon^2}.
\end{align}
Consider any set $S$ of at most $k$ nodes.
With probability at least $1 - n^{-c}/{n \choose k}$,
the following inequality holds for $S$:
\begin{align}
\left|   \frac{n \cdot \Inter{S}}{\rho}
       - \FinalSize{S} \right|
& < \frac{\epsilon}{2} \cdot \OPT.
\label{eqn:rr-approximation}
\end{align}
Here $\OPT$ is the maximum influence spread that can be achieved by
any set $S$ of at most $k$ nodes.
\end{lemma}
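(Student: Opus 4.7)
The plan is to reduce the estimate $\frac{n \cdot \Inter{S}}{\rho}$ to a sum of i.i.d. Bernoulli indicators and then apply a Chernoff--Hoeffding bound with carefully chosen parameters so that the desired additive deviation $\tfrac{\epsilon}{2}\OPT$ corresponds exactly to the threshold on $\rho$ stated in the lemma.

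First, I would set up the indicator variables. For each $i=1,\ldots,\rho$, let $X_i = \mathbb{1}[S \cap R_i \neq \emptyset]$, so $\Inter{S} = \sum_{i=1}^\rho X_i$. The $X_i$ are i.i.d., and by the key RR-set property (recorded just before the lemma) the probability that a fixed set $S$ intersects a random RR set $R_i$ equals the probability that a uniformly random node $v$ becomes activated when seeding $S$; this in turn equals $\FinalSize{S}/n$. Thus $\Expect{X_i} = \FinalSize{S}/n$, so
\begin{equation*}
  \mu := \Expect{\Inter{S}} \;=\; \frac{\rho \cdot \FinalSize{S}}{n}.
\end{equation*}
The statement $\left|\frac{n\Inter{S}}{\rho} - \FinalSize{S}\right| < \frac{\epsilon}{2}\OPT$ is equivalent, after multiplying through by $\rho/n$, to the additive deviation bound $|\Inter{S} - \mu| < \frac{\rho\epsilon\OPT}{2n}$.

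Next, I would apply Chernoff--Hoeffding to the sum of Bernoullis. Writing $p = \FinalSize{S}/n$ and $q = \OPT/n$, note $p \le q$ since $S$ has at most $k$ nodes while $\OPT$ is the best possible spread. A standard two-sided Bernstein/Chernoff bound gives, for $t = \rho q\epsilon/2$,
\begin{equation*}
  \Pr\bigl[\,|\Inter{S} - \mu| \geq t\,\bigr]
  \;\le\; 2\exp\!\left(-\frac{t^2}{2\rho p + \tfrac{2}{3}t}\right)
  \;\le\; 2\exp\!\left(-\frac{\rho q\epsilon^2}{8 + \tfrac{4}{3}\epsilon}\right),
\end{equation*}
where in the last step I used $p \leq q$ to replace $2\rho p$ by $2\rho q$ in the denominator, pulled a factor of $\rho q$ out, and simplified. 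Substituting $q = \OPT/n$ and absorbing the $\tfrac{4}{3}$ into a clean $8+2\epsilon$ constant (standard slack) yields a failure probability of at most $2\exp\!\bigl(-\tfrac{\rho \OPT \epsilon^2}{n(8+2\epsilon)}\bigr)$. Plugging in the hypothesis $\rho \geq (8+2\epsilon)\,\tfrac{n}{\OPT\,\epsilon^2}\,(c\log n + \log\binom{n}{k}+\log 2)$ from the lemma makes the exponent at least $c\log n + \log\binom{n}{k}+\log 2$, so the bound becomes at most $n^{-c}/\binom{n}{k}$, as required.

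The main obstacle I anticipate is matching the constants cleanly: the naive multiplicative Chernoff form $\Pr[|X-\mu|\ge \delta\mu]$ with $\delta = \epsilon\OPT/(2\FinalSize{S})$ produces a $\delta^2\mu$ term that only behaves well when $\FinalSize{S}$ is comparable to $\OPT$; when $\FinalSize{S} \ll \OPT$ one must fall back on the Bernstein/additive form and carefully use $p \le q$ to replace the variance proxy. Handling both regimes uniformly with a single Bernstein-type inequality, and then massaging the $\tfrac{4}{3}\epsilon$ term into the advertised $2\epsilon$ slack inside $8+2\epsilon$, is where the bookkeeping has to be done carefully to exactly reproduce the $(8+2\epsilon)$ constant in the assumed lower bound on $\rho$.
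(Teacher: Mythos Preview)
The paper does not prove this lemma at all: it is quoted verbatim as ``Lemma~3 of \cite{tang:xiao:shi:near-optimal}'' and used as a black box. So there is no in-paper proof to compare against. That said, your argument is exactly the standard one underlying the cited result: cast $\Inter{S}$ as a sum of i.i.d.\ Bernoulli indicators with mean $\rho\,\FinalSize{S}/n$, apply a Bernstein-type Chernoff bound, use $\FinalSize{S} \le \OPT$ to replace the variance proxy $\rho p$ by $\rho q$, and solve for the required $\rho$. Your handling of the constants is correct, including the observation that $8 + \tfrac{4}{3}\epsilon \le 8 + 2\epsilon$ only \emph{loosens} the exponential tail, so the stated $(8+2\epsilon)$ constant is a valid (slightly slack) upper bound. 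In short, nothing is missing; this is the proof one would find in the cited source.
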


Lemma~\ref{lem:reverse-reachability} implies that 
when $\rho$ is large enough to allow taking a union bound over all
relevant sets $S$, the fraction of sets $R_i$ that intersect a
candidate seed set $S$ is an accurate stand-in for the actual
objective function \FinalSize{S}.
As already observed by Borgs et al.~\cite{borgs:brautbar:chayes:lucier},
this reduces the problem of influence maximization to a Maximum
Coverage problem:
selecting a set of at most $k$ seed nodes that jointly maximize the
number of sets $R_i$ containing at least one seed node.

Without loss of generality, assume that $\SBUDGET[j] \leq \SBUDGET$
for all $j$.
Let \OPT[j] be the maximum influence that advertiser $j$ could achieve
with \SBUDGET[j] seed nodes, and ignoring the budget constraint \BUDGET[j].
A lower bound \OPTH[j] on each \OPT[j] can be found by running the TIM
algorithm \cite{tang:xiao:shi:near-optimal}
with seed sets of size \SBUDGET[j]. 
For each advertiser $j$, we draw
$\rho^{(j)} \geq 9 n^3 m^2 \cdot
\frac{c \log n + \log m + \log {n \choose \SBUDGET[j]} + \log 2}{%
      \OPTH[j] \cdot \epsilon^2}$
nodes%
\footnote{This value of $\rho^{(j)}$ is chosen with foresight
    to apply Lemma~\ref{lem:reverse-reachability} and allow the
    application of various bounds in the following paragraphs.
    }
$v$ i.i.d.~uniformly (in particular, with replacement) from \NODES[j],
and for each such $v$,
we let $R \subseteq \NODES[j]$ be the random reverse reachable set.
For later ease of notation,
we index the nodes as $v_i \in \NODES$, and the corresponding sets as $R_i$.
Notice that the same node $v$ might appear as $v_i$ for different $i$,
with possibly different sets $R_i$.
For each such index $i$ of a node $v_i$ and set $R_i$,
let $j(i)$ be the unique advertiser $j$
such that $R_i \subseteq \NODES[j]$,
and write $\RRP[j] = \Set{i}{j(i) = j}$ for the set of all sets used
to estimate the influences for advertiser $j$. 

Focus on any advertiser $j$, and the influence of any set
$S \subseteq \NODES[j]$ with $\SetCard{S} \leq \SBUDGET[j]$.
Using Lemma~\ref{lem:reverse-reachability} with
$\epsilon' = \frac{\epsilon}{nm} \leq \half$ and $c'=c+\log_n(m)$,
the influence of $S$ is estimated to within an additive term
$\frac{\epsilon}{nm} \cdot \OPT[j]$ with probability at least 
$1 - n^{-c'}/{n \choose \SBUDGET[j]}
= 1 - 1/(n^c m) \cdot 1/{n \choose \SBUDGET[j]}$.
By a union bound over all such sets $S$ (of which there are%
\footnote{When running a greedy algorithm, only $O(mn\SBUDGET[])$ such sets
  need to be considered, which leads to significant computational
  savings. Here, our goal is to leverage the reverse reachability
  technique not necessarily for significant computational savings, but
  for improved approximation guarantees.}
${n \choose \SBUDGET[j]}$) and all $m$ advertisers $j=1, \ldots, m$,
the influence of \emph{all} such sets $S$ is simultaneously estimated
to within an additive term of $\frac{\epsilon}{nm} \cdot \OPT[j]$,
with probability at least $1 - 1/n^c$.
Assume for the rest of this section that the high-probability event
has happened.
 
Now consider an arbitrary seed set $S$ with $\SetCard{S} \leq \SBUDGET$,
not necessarily contained in just one partition \NODES[j].
Let $S^{(j)} = S \cap \NODES[j]$, and let 
$q^{(j)} = \frac{n \cdot \SetCard{\Set{i \in \RRP[j]}{R_i \cap S^{(j)} \neq \emptyset}}}{\rho}$
be the estimated expected influence of $S^{(j)}$.
By the preceding paragraph, the influence of each $S^{(j)}$ is estimated
to within an additive $\frac{\epsilon}{nm} \cdot \OPT[j]$, i.e.,
$\Abs{q^{(j)} - \FinalSize{S^{(j)}}} \leq \frac{\epsilon}{nm} \cdot \OPT[j]$.
We next want to show that for each $j$, the actual payoff the host
obtains from advertiser $j$, i.e.,
$\min(\BUDGET[j], \PER[j] \cdot \FinalSize{S^{(j)}})$ is estimated to
within an additive term $\epsilon/m \cdot \OPT$;
then, by summing over all $m$ advertisers $j$, the total estimation error
is at most $\epsilon \cdot \OPT$.

To prove the estimation error bound for an advertiser $j$,
we consider several cases.
First, if $S^{(j)} = \emptyset$, then the payoff 0 is estimated
completely accurately.
Second, if $\BUDGET[j] \leq \PER[j]$, i.e., the advertiser's budget is
so low that he pays for at most one impression, the payoff is also
estimated completely accurately.
The reason is that for any $S^{(j)} \neq \emptyset$, all nodes in $S^{(j)}$
will always be shown impressions,
so $\PER[j] \cdot q^{(j)} \geq \BUDGET[j]$ with probability 1.
As a result, the payoff from the advertiser is estimated correctly as
\BUDGET[j].
Finally, if $\BUDGET[j] \geq \PER[j]$, then
$\OPT[j] \leq n \cdot \OPT/\PER[j]$.
The reason is that the optimum does at least as well as using \SBUDGET[j]
seed nodes for advertiser $j$, which would yield payoff
$\OPT \geq \min(\BUDGET[j], \PER[j] \cdot \OPT[j])
\geq \PER[j] \cdot \min(1, \OPT[j])
\geq \PER[j] \OPT[j]/n$, because $\OPT[j] \leq n$.
Thus, the estimation error for the payoff from advertiser $j$ is at
most 
$\PER[j] \cdot \frac{\epsilon}{nm} \cdot \OPT[j]
\leq \frac{\epsilon}{m} \cdot \OPT$.%
\footnote{
Some of the complications in the proof arose because of the
possibility of very different budgets and payoffs per node across
advertisers.
When the payoff per node for all advertisers is equal,
i.e., $\PER[j] = \hat{\PER}$ for all $j$,
then
$\rho^{(j)} \geq (8+2\epsilon)n^2m^2\hat{\PER}^{2}\cdot \frac{c\log {n} + 
\log {n \choose K} + \log {K} + \log {2}}{\epsilon^2 OPT^2}$
RR set samples are enough to provide accurate payoff estimates with
sufficiently high probability. 
}

\subsection{The LP and Rounding Algorithm}\label{sec:LPRound}
Notice that except for the budget caps \BUDGET[j],
the objective is a sum of (scaled) coverage functions.
This suggests the natural formulation of an LP, which allows us to 
use a randomized rounding technique due to Gandhi et
al.~\cite{gandhi:khuller:parthasarathy:srinivasan:dependent}.

Recall that each set $R_i$ is generated by a reverse (random) BFS from
a randomly sampled node $v \in \NODES$, where the same node $v$ may be sampled
for multiple $i$.
Also recall that $j(i)$ is the unique advertiser for the reverse
reachable set $R_i$,
and that \RRP[j] is the set of all 
reverse reachability sets $R_i$ with $j(i) = j$.
We use the decision variable \Select[j]{v} to denote whether node $v$
was targeted for ad $j$,
\SSel{i} for the decision variable whether at least one node from the set
$R_i$ was selected (for its corresponding ad), 
and the variable \Rev[j] for the total revenue from advertiser $j$.
The optimization objective can be expressed using the following
Integer Linear Program (ILP):

\begin{LP}[eqn:ILP]{Maximize}{\sum_{j} \Rev[j]}
  \SSel{i} \leq \sum_{v \in R_i} \Select[j(i)]{v} & \mbox{ for all } i \\
  \SSel{i} \leq 1 & \mbox{ for all } i\\
  \sum_j \Select[j]{v} \leq \ExposureBound{v} & \mbox{ for all } v\\ 
  \sum_v \Select[j]{v} \leq \SBUDGET[j] & \mbox{ for all } j\\ 
  \sum_{j,v} \Select[j]{v} \leq \SBUDGET\\
  \Rev[j] \leq \frac{n}{\rho}\cdot \sum_{i \in \RRP[j]} \PER[j] \SSel{i} & \mbox{ for all } j\\
  \Rev[j] \leq \BUDGET[j] & \mbox{ for all } j\\
  \SSel{i}, \Rev[j], \Select[j]{v} \geq 0 & \mbox{ for all } i, j, v\\
  \Select[j]{v} \in \SET{0,1} & \mbox{ for all } j, v.
\end{LP}

The first constraint states that a target node $v_i$ is only covered
if at least one node from the set $R_i$ is selected,
and the second constraint ensures that the LP derives no benefit from
double-covering nodes.
The third constraint encodes the bound on the number of targeted ads
that $v$ can be exposed to,
while the fourth constraint encodes the bound on the seed set sizes of
the advertisers $j$.
The fifth constraint captures the overall bound on the total number of
targeted ads.
The sixth and seventh constraints characterize the objective function
for advertiser $j$.

Of course, solving the ILP is NP-hard, so as usual, we consider the
fractional LP relaxation obtained by omitting the final integrality
constraint $\Select[j]{v} \in \SET{0,1}$.
Without this constraint, the LP can be solved in polynomial time,
obtaining a fractional solution $(\SSelV, \RevV, \SelectV)$.

To round the fractional solution,
we use the dependent rounding algorithm of Gandhi et al.
(See Section~2 of \cite{gandhi:khuller:parthasarathy:srinivasan:dependent}.)
This algorithm can be applied to round fractional
solutions whenever the variables to be rounded (here: the
\Select[j]{v}) can be considered as edges in a bipartite graph,
and constraints correspond to bounds on the (fractional) degrees of the
nodes of the bipartite graph.\footnote{%
The fifth constraint of our LP (overall budget) does not fit in
this framework, since it is a constraint on all variables.
We extend the rounding algorithm of Gandhi et al.~by one more final
loop (see below) which repeatedly picks any two fractional variables
and randomly rounds them using the same approach as the rest of the algorithm, 
until only one fractional variable remains.
This last variable can then be rounded randomly by itself.}
It repeatedly rounds the fractional values associated with a subset of
edges forming a maximal path or a cycle,
until all variables are integral;
let \ISelect[j]{v} denote the rounded (integral) values corresponding
to \Select[j]{v}.
The algorithm of \cite{gandhi:khuller:parthasarathy:srinivasan:dependent}
provides the following guarantees:
\begin{itemize}
\item The probability that the variable \Select[j]{v} is rounded to
  $\ISelect[j]{v} = 1$
  is exactly \Select[j]{v}, for all $j$ and $v$.
\item In the bipartite graph, each node's degree is rounded up or down
to an adjacent integer.
In the context of our LP~\eqref{eqn:ILP},
this means that the number of seed nodes for each advertiser is either
rounded up or down to the nearest integer,
as is the number of exposures to sponsored ads for each node.
In particular, the third and fourth constraints are satisfied with
probability 1.
\item For any node of the bipartite graph and any subset of incident
edges, the rounded values are negatively correlated.
In the context of our LP~\eqref{eqn:ILP},
for any advertiser $j = j(i)$ and reverse reachable set $R_i$, 
the algorithm guarantees that
$\Prob{\ISelect[j]{v} = 0 \text{ for all } v \in R_i}
\leq \prod_{v \in R_i} \Prob{\ISelect[j]{v} = 0}
= \prod_{v \in R_i} (1-\Select[j]{v})$.
\end{itemize}

After rounding all the \Select[j]{v} to \ISelect[j]{v},
for each $i$, we set
$\ISSel{i} = 1$ if at least one $v \in R_i$ has $\ISelect[j(i)]{v} = 1$,
and for each $j$, we set
$\IRev[j] = \sum_{i \in \RRP[j]} \PER[j] \cdot \ISSel{i}$.
Together with the guarantees of Gandhi et al., this ensures that 
all ILP constraints except possibly the seventh (budget) constraint
are satisfied; the budget constraint is discussed more in
Section~\ref{sec:budget-constraints}.


\begin{theorem} \label{thm:lp-rounding-guarantee}
Under the IC model, with constraints on node exposures and advertiser
seed set sizes, 
the (modified) Gandhi et al.~correlated LP-rounding based
algorithm is a polynomial-time $(1-1/\e)$-approximation algorithm,
if advertisers pay based on their \emph{expected} exposure.
\end{theorem}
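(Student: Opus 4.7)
The plan is to follow the standard LP-rounding template for $(1-1/\e)$-approximations: first, argue that the fractional optimum $\LPOPT$ of the LP relaxation of the ILP above upper-bounds the true optimum $\OPT$ (up to the reverse-reachability estimation error already controlled in the preceding subsection); second, show that the correlated rounding recovers in expectation at least a $(1-1/\e)$ fraction of $\LPOPT$. For the upper bound, I would take the optimal seed assignment and lift it to a feasible LP solution by setting $\Select[j]{v} \in \SET{0,1}$ to the integer incidences, $\SSel{i} = 1$ exactly when $R_i$ intersects the seed set for advertiser $j(i)$, and $\Rev[j] = \min(\BUDGET[j], \PER[j] \cdot \Expect{|\FinalActive[j]|})$. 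Conditioning on the high-probability event of Lemma~\ref{lem:reverse-reachability} --- invoked with the parameters already fixed earlier in this section --- the RR estimate of $\Expect{|\FinalActive[j]|}$ differs from the truth by at most $\frac{\epsilon}{nm}\OPT[j]$, which (via the same case analysis on $\BUDGET[j]$ vs.~$\PER[j]$ carried out earlier) costs at most $\epsilon \cdot \OPT$ in total across all advertisers. Hence $\LPOPT \geq (1-\epsilon)\OPT$.

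For the rounding analysis, fix any $i$ and let $j = j(i)$. The negative-correlation guarantee of Gandhi et al.~directly yields
\[
\Prob{\ISSel{i} = 1} \;\geq\; 1 - \prod_{v \in R_i}(1 - \Select[j]{v}).
\]
Combining this with the LP constraints $\SSel{i} \leq \sum_{v \in R_i}\Select[j]{v}$ and $\SSel{i} \leq 1$ and the standard convexity inequality $1 - \prod_v(1-a_v) \geq (1-1/\e)\min(1, \sum_v a_v)$ for $a_v \in [0,1]$ gives $\Prob{\ISSel{i} = 1} \geq (1-1/\e)\SSel{i}$. By linearity, the RR-estimated expected exposure of advertiser $j$ under the rounded seeds is then at least $(1-1/\e)$ times its fractional counterpart. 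The elementary inequality $\min(B, pT) \geq \alpha\min(B, pS)$ whenever $T \geq \alpha S$ and $\alpha \in [0,1]$ propagates the $(1-1/\e)$ factor through the budget cap to the per-advertiser \MEOBJ payoff, and summing over $j$ combined with the upper bound above delivers a $(1-1/\e - O(\epsilon))$ guarantee. Driving $\epsilon$ down yields the claimed factor.

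The main delicate points lie outside the canonical bipartite setting of Gandhi et al. First, the budget caps $\BUDGET[j]$: individual realizations of the rounded solution may overshoot a budget, but the hypothesis that advertisers pay based on \emph{expected} exposure (i.e., we optimize \MEOBJ) precisely means the relevant quantity is $\min(\BUDGET[j], \PER[j]\Expect{|\FinalActive[j]|})$, which behaves well under the propagation argument above. Second, the global seed-size constraint $\sum_{j,v}\Select[j]{v} \leq \SBUDGET$: the trailing pairwise loop added to the Gandhi et al.~algorithm applies their atomic two-variable primitive to disjoint pairs, which preserves the marginal rounding probabilities and the negative-correlation property used above, and (together with the integrality of all caps) keeps each $\ExposureBound{v}$ and $\SBUDGET[j]$ constraint satisfied with probability one. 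Checking that these extensions do not break the two key invariants --- marginals equal to $\Select[j]{v}$ and the product-form upper bound on $\Prob{\ISSel{i}=0}$ --- is where I expect the bulk of the careful but routine verification to sit.
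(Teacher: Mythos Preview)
Your proposal is correct and follows essentially the same route as the paper. The core rounding step --- negative correlation from Gandhi et al.\ combined with the concavity bound $1-\prod_v(1-a_v) \geq (1-1/\e)\min(1,\sum_v a_v)$ applied against the first two LP constraints --- is exactly what the paper does; your write-up is in fact more complete, since you spell out the $\LPOPT \geq (1-\epsilon)\OPT$ upper bound and the budget-cap propagation $\min(B,pT) \geq \alpha\min(B,pS)$ explicitly, whereas the paper's proof of the theorem proper only records the per-set bound $\Expect{\IRev[j]} \geq (1-1/\e)\Rev[j]$ and relegates the estimation-error and budget discussions to the surrounding subsections.
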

\begin{proof}
Consider any set $R_i \in \RRP[j]$.
With explanations of some steps given below,
the probability that $\ISSel{i} = 1$ is
\begin{align*}
\Prob{\ISSel{i} = 1}
    & = 1 - \Prob{\ISelect[j(i)]{v} = 0 \text{ for all } v \in R_i}
\\  & \stackrel{(1)}{\geq} 1 - \prod_{v \in R_i} \Prob{\ISelect[j(i)]{v} = 0}
\\  & = 1 - \prod_{v \in R_i} (1-\Select[j(i)]{v})
\\  & \stackrel{(2)}{\geq} 1 - \e^{-\sum_{v \in R_i} \Select[j(i)]{v}}
\\  & \geq 1 - \e^{-\min(1, \sum_{v \in R_i} \Select[j(i)]{v})}
\\  & \stackrel{(3)}{\geq} (1-1/\e) \cdot \min(1, \sum_{v \in R_i} \Select[j(i)]{v})
\\  & \stackrel{(4)}{\geq} (1-1/\e) \cdot \SSel{i}.
\end{align*}

The inequality labeled (1) used the negative dependence discussed above.
The inequality labeled (2) used the standard bound that $1-x \leq
\e^{-x}$ for all $x \geq 0$.
The inequality labeled (3) uses that the function
$x \mapsto 1-\e^{-x}$ is concave,
and hence on the interval $[0,1]$ is lower-bounded by the straight
line which agrees with it at the endpoints $x=0$ and $x=1$,
which is the function $x \mapsto (1-1/\e) \cdot x$.
Finally, the inequality labeled (4) uses that
$\SSel{i} \leq \min(1, \sum_{v \in R_i} \Select[j(i)]{v})$
by the first and second LP constraints.
Using the definition of \IRev[j], we now get that

\begin{align*}
\Expect{\IRev[j]}
& = \Expect{\sum_{i \in \RRP[j]} \PER[j] \cdot \ISSel{i}}
\; = \; \sum_{i \in \RRP[j]} \PER[j] \cdot \Prob{\ISSel{i} = 1}
\\ & \geq \sum_{i \in \RRP[j]} \PER[j] \cdot (1-1/\e) \cdot \SSel{i}
\; \geq \; (1-1/\e) \cdot \Rev[j].       
\end{align*}

\end{proof}

\subsection{No Seed Set Size Restrictions}
\label{sec:unrestricted-seed-sets}

In the model of Aslay et al.,
and in our experiments in Section~\ref{sec:experiments}, 
there are no restrictions placed on the seed set sizes of individual
advertisers;
that is, the fourth set of constraints in the LP~\eqref{eqn:ILP}
is missing. 
In the absence of these constraints,
the correlated LP-rounding algorithm of Gandhi et
al.~\cite{gandhi:khuller:parthasarathy:srinivasan:dependent} can be
further sped up.
Note that except for the fifth (sum of seed set sizes) constraint,
the \Select[j]{v} constraints are a collection of disjoint stars.
As a result, the general correlated rounding algorithm of Gandhi et
al.~can be replaced with their algorithm for stars.
The resulting algorithm runs in linear time and is very simple to implement.
It first uses the correlated rounding on pairs of variables
\Select[j]{v}, \Select[j']{v} corresponding to the same 
node (and different advertisers) until no such pairs remain;
subsequently, as for the general algorithm,
we add one more \textbf{while} loop to round pairs of the remaining
\Select[j]{v} (which now necessarily correspond to distinct nodes $v,v'$).

Because there is at most one fractional variable remaining incident on
any node $j$ or $v$, the second property of the Gandhi et al.~rounding
algorithm is preserved.
The negative correlation property holds inductively,
because at most two values are rounded in each step.

\subsection{The Budget Constraints}
\label{sec:budget-constraints}

In Theorem~\ref{thm:lp-rounding-guarantee}, we emphasized that
advertiser payments have to be based on the expected exposure,
rather than the actual exposure.
Indeed, using the LP rounding approach, maintaining the budget
constraints in each run is not possible without a huge ($\Omega(m)$)
loss in the approximation guarantee.
The reason is that LP~\eqref{eqn:ILP} has an integrality gap of
$\Omega(m)$ when the budgets are constrained.
Consider the following input: the influence graph is a star with $n$ leaves,
in which the center node has probability 1 of influencing all its
neighbors, and no one else can influence any other node.
Thus, any advertiser who gets to advertise to the center node will
reach the entire graph.

Suppose that each advertiser $j$ has a budget of
$\BUDGET[j] = n/m$, and the center node $v$ has a constraint of
$\ExposureBound{v} = 1$.
A fractional solution can allocate a fraction of $1/m$ of the center
node to each advertiser; each advertiser then reaches $n/m$ nodes,
not exceeding his budget, and the host obtains a payoff of $n$.
But any integral solution can only allocate $v$ to one advertiser,
who will then pay at most his budget $n/m$.
Thus, any integral solution obtains payoff at most $n/m$.

This problem disappears if each advertiser's budget must only be met
in expectation.
In other words, on any given day, the advertiser's budget could be
exceeded, even by a lot.
But on average, the advertiser does not pay more for impressions than
\BUDGET[j] per day.
Because the number of impressions resulting from a particular rounding
choice is a random variable in $[0,n]$,
standard tail bounds (e.g., Hoeffding Bounds) show that the total
payment is with high probability close to the expected payment after
about $O(n)$ days.

\section{Experiments} \label{sec:experiments}
\newcommand{\RP}{\ensuremath{\alpha}\xspace}

In this section, we 
describe an empirical evaluation of
the relationship between the total revenue to the host and the algorithm used,
the number of advertisers, and the total number of sponsored ads.
More importantly, we also consider the interplay of 
competition with the influence strengths in the network and 
the similarity/dissimilarity of the advertisers' influence networks.
We conducted our experiments on a 36-core Linux server with 
an Intel Xeon E5-2695 v4 processor at 2.1GHz and 1TB memory. 
We develop a parallel version of the Greedy algorithm that scales
to networks of millions of nodes and tens of millions of edges,
with dozens of advertisers (see Section \ref{sec:scalability}).


All of our experiments are based on the Independent Cascade (IC) Model.
We employed the Reverse Reachability technique to (approximately)
compute the influence of node sets more efficiently.
In Section \ref{sec:IC}, we obtained an upper bound on the number of
RR sets $\rho^{(j)}$ to guarantee good approximations with high probability.
This bound is quite large, and using the
corresponding number of sets would not permit us to scale experiments
to large networks.
We conducted extensive experiments on the number of RR sets that are
sufficient to guarantee high accuracy \emph{in practice}.
These experiments are described in detail in
Appendix~\ref{sec:RR-set-experiments};
the upshot is that $\rho^{(j)} \geq 10 n$ RR sets are sufficient
to guarantee an estimation error $\epsilon \leq 2\%$.
Therefore, for the experiments reported in the remainder of this
section, we used a value of $\rho^{(j)} = 10 n$.

\subsection{Data Sets and Generation of IC Instances}
We used the following four real-world networks,
whose key statistics are summarized in Table~\ref{table:datasets}.
(Two larger networks used for scalability experiments are
described in Section \ref{sec:scalability}.)

\begin{itemize}
\item \Facebook \cite{konect} is an ego network of a user in
  Facebook, excluding the ego node.
\item \Advogato \cite{konect} is a social network whose nodes are
  users of the Advogato platform;
  directed edges $(u,v)$ are trust links.
\item \DBLP \cite{konect} is a citation network in which nodes
  are papers, and there is a directed edge from $u$ to $v$
  if $u$ cites $v$.
\item \NetHEPT \cite{chen2009efficient} is a collaboration network
  generated from co-authorships in high-energy physics publications.
\end{itemize}


\begin{table}[htb]
\centering
\caption{Data Set characteristics}
\label{table:datasets}
\begin{tabular}{|l|l|l|l|}
\hline
Dataset   & \#nodes & \#edges & Type       \\ \hline
\Facebook & 2,889    & 2,981    & Undirected \\ \hline
\Advogato & 6,542    & 51,127   & Directed   \\ \hline
\DBLP     & 12,592   & 49,743   & Directed   \\ \hline
\NetHEPT  & 15,229   & 31,376   & Undirected \\ \hline
\end{tabular}
\vspace{-2mm}
\end{table}

In creating influence networks (with edge probabilities \EdgeProb[j]{u,v}),
for most of our experiments,
we wanted to avoid the assumption of uniform probabilities on edges
commonly made when evaluating algorithms for influence maximization.
The reason is that one key aspect of our work is the notion of
coordination/competition between different advertisers.
When all edges have uniform probabilities \EdgeProb[j]{u,v}, the value
of a node is solely determined by its network position.
We therefore define a way of generating edge probabilities
non-uniformly that gives some nodes intrinsically more importance.
For each node $v$, we draw a parameter $\lambda_v$
independently and uniformly from $[0, 0.4]$ for \Facebook, \DBLP and \NetHEPT, 
and $[0, 0.3]$ for \Advogato.
(The different intervals were chosen to counteract the 
effects of varying edge densities in the datasets.)
We then define $\EdgeProb{u,v} = \lambda_u \cdot \lambda_v$.
We will discuss below how the \EdgeProb[j]{u,v} are correlated for
different $j$.

For all experiments, we set the constraint on the number of sponsored
ads that can be shown to node $v$ to $\ExposureBound{v}=1$,
i.e., each node can be shown a sponsored ad from at most one advertiser.
This results in maximal inter-advertiser constraints for the
seed sets.
We also performed experiments varying the value of \ExposureBound{v}.
The outcome of these experiments is that what matters is mostly
the \emph{ratio} $\ExposureBound{v}/m$.
As \ExposureBound{v} grows (with $m$ fixed),
the optimization problem gradually decouples across advertisers.
As a result, the most interesting and novel aspects of the model
manifest themselves when $\ExposureBound{v}=1$.

For simplicity, we set each advertiser's payoff per exposure to $\PER[j]=1$.
The overall numbers of sponsored ads \SBUDGET are varied for
different experiments, and discussed below.
In our experiments, we did not consider individual advertiser seed set
constraints \SBUDGET[j];
such constraints are heavily studied in traditional Influence
  Maximization experiments, and we wanted to focus on the novel
  aspects arising due to advertiser competition.

\subsection{Comparison between the Algorithms} \label{sec:compAlgo}
Our first set of experiments simply compared the performance of our
algorithms and several baseline heuristics.
We implemented the following algorithms:

\begin{itemize}[leftmargin=*]
\item \Greedy is the standard greedy algorithm for maximizing a
  submodular function subject to a matroid constraint 
  (Theorem~\ref{thm:submodular-greedy}).
\item \LPRound is described in Section~\ref{sec:LPRound}.
\item \MaxDeg considers nodes by non-increasing degrees, and assigns
  them to advertisers in a round-robin order.
  Each node $v$ is assigned to $r_v$ consecutive advertisers.
\item \EigenCent considers nodes by non-increasing eigenvector
  centrality%
\footnote{The eigenvector entries of the leading eigenvector of the
graph's adjacency matrix.},
and assigns them to advertisers in a round-robin order, as with \MaxDeg.
\item \LPOPT is the value of the optimal fractional solution of the
  LP~\eqref{eqn:ILP}. It provides an upper bound on the value of the
  optimal solution, and thus gives us a benchmark to compare the
  algorithms' performance to, on an absolute scale.
\end{itemize}

\subsubsection{Varying the total number of seeds}\label{sec:PayoffVsK}

In the first set of experiments, we kept the number of advertisers
constant at $m=3$, and varied the total number \SBUDGET of seed nodes.
Thus, these experiments are similar to evaluations of standard
Influence Maximization algorithms.
To avoid strong effects of competition between
advertisers (which we are evaluating in later sections),
we generated the $\lambda^{(j)}_v$ (and hence the \EdgeProb[j]{u,v})
independently for each $j$.

\begin{figure*}[htb]
\includegraphics[width=\textwidth]{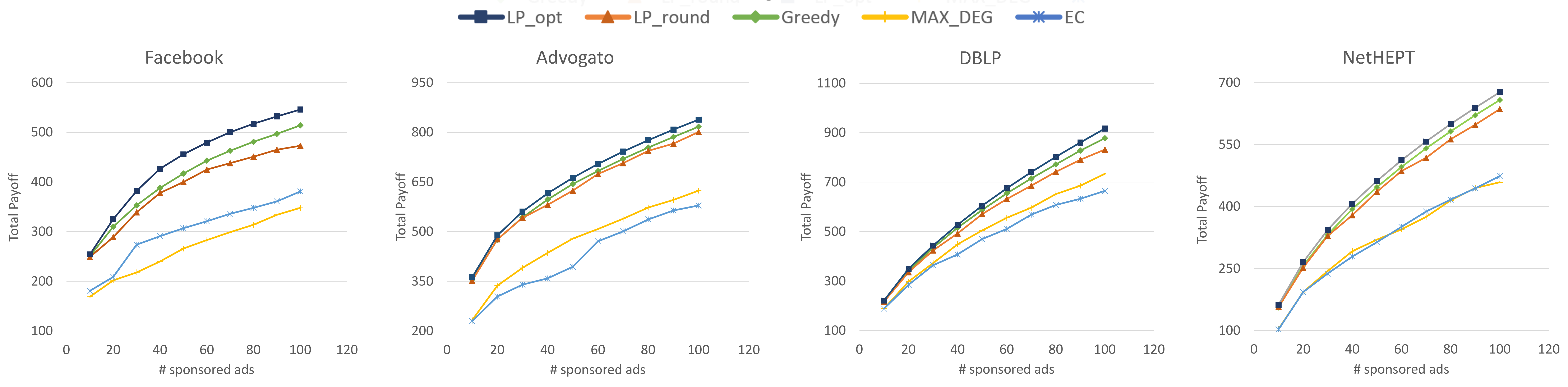}
\caption{Host payoff vs.~number of sponsored ads (\SBUDGET)
  }
  \label{fig:payoffVsK}
\end{figure*}

Figure \ref{fig:payoffVsK} shows a comparison of the total host payoff
achieved by the algorithms as \SBUDGET is varied from $10$ to $100$.
Both \LPRound and \Greedy perform significantly better than
\MaxDeg or \EigenCent.
This is not surprising, as the heuristics only consider the network
\emph{structure}, but not the influence probabilities associated with
edges.
However, the random generation of edge probabilities still ensures
that nodes of high degree or high centrality tend to be more influential;
hence, in some scenarios (especially with small \SBUDGET),
the payoffs of the \MaxDeg and \EigenCent heuristics are comparable to
those of \Greedy and \LPRound.

A comparison to the fractional LP solution value shows that both
\Greedy and \LPRound achieve more than 85\% of the optimal payoff,
which is significantly more than the respective guarantees of
\half and $1-\frac{1}{\e}$.
Experimentally, on these instances, 
\Greedy performed marginally better than \LPRound.

\subsubsection{Varying the number of advertisers}

For the second set of experiments, we varied the number $m$ of
advertisers from 1 to 20, scaling the number of seeds as
$\SBUDGET = 10m$, to keep an average of 10 seeds per advertiser.
Again, we did not add constraints on the budgets or the individual
advertisers' seed sets.

Using this set of experiments, we also studied the effect (on payoff) of
competition between advertisers, as caused by the constraint that each
node can only be chosen as a seed for one advertiser.%
\footnote{If there were no constraint on the number of sponsored ads per node,
	there would be no competition, and the total payoff would increase
	linearly in $m$.
	Note also that the total influence of the seed sets of all advertisers
	is different from the influence of their seed sets' union if assigned
	to one advertiser: while no two advertisers can choose the same seed
	node, if the different seed nodes \emph{influence} the same nodes
	later, they will both derive utility (and the host revenue) from
	those exposures.}
Therefore, we made the influence probabilities \EdgeProb[j]{u,v} the
same for all $j$, i.e., the same nodes are most influential for all
advertisers.

Figure~\ref{fig:payoffVsm} shows the per-advertiser payoff achieved 
as $m$ is varied from 1 to 20.
A comparison of the algorithms' performance yields results similar to
those reported in Section~\ref{sec:PayoffVsK}. 
For a single advertiser, the performance of the \MaxDeg heuristic is
comparable to \LPRound and \Greedy.
As explained earlier, this observation can be attributed
to the random generation of uniform edge probabilities. 
For larger values of $m$, there is a significant difference in the
performances of the \Greedy and \LPRound algorithms  
vs.~the \MaxDeg and \EigenCent heuristics.
A likely explanation is the following:
when many advertisers compete on the same network,
\Greedy and \LPRound can alleviate competition for a limited pool of
highly influential nodes by indirectly influencing such nodes using
other carefully chosen seeds.
In contrast, \MaxDeg and \EigenCent do not consider such potential
propagation for seed selection. 

\begin{figure*}
\includegraphics[width=\textwidth]{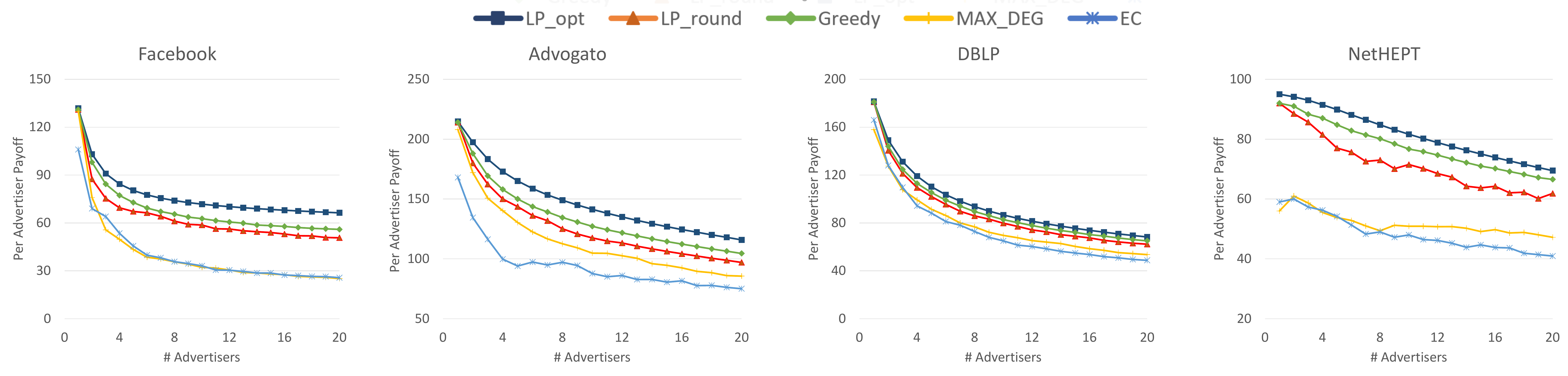}
\caption{Average per-advertiser payoff vs.~number of advertisers ($m$) }
\label{fig:payoffVsm}
\end{figure*}
      
Another observation is that as a result of the competition,
per-advertiser payoff decreases and hence,
the total host payoff does not scale linearly in $m$.
For a single advertiser, \DBLP and \Advogato have significantly higher payoff
than \Facebook and \NetHEPT.
With 20 advertisers, on the \DBLP and \Facebook networks,
the average payoff per advertiser decreases by factors of 2.8 and 2.4,
respectively, while the decrease for \Advogato is a factor of 2.
Comparatively, \NetHEPT exhibits only modest competition,
and its per-advertiser payoff decreases only by a factor of $1.4$.
A possible cause for the higher decreases may be the very skewed degree
distributions of the \DBLP and \Facebook networks,
which results in a smaller number of extremely valuable nodes,
and the fact that \Advogato and \DBLP are directed, resulting in less
symmetry between nodes.
The strong competition for \DBLP and \Facebook can also be observed in
the fact that the gradient of the per-advertiser payoff is very
negative for small values of $m$.
For larger $m$, the decrease becomes less pronounced,
because there are many marginally useful seed nodes to choose from.

\subsection{Effects of Competition}\label{sec:EOComp}
Next, we focused in detail on the effects of competition on
total host payoff.
More specifically, the question we were interested in is the following:
how does the similarity or dissimilarity of influence networks for
different advertisers affect revenue?
If the influence networks are very similar, then high-value seed nodes
for one advertiser will typically also be high-value for others,
and the constraint that no node must be chosen by more than a given
number of advertisers (in our experiments: $1$) will constrain the
reach of seed sets.
On the other hand, if the influence networks are very different,
then different advertisers might focus on different parts of the
network, and the host could potentially derive significantly higher
payoff.

Since the purpose of these experiments is not to compare the
performance of algorithms, but to draw qualitative insights,
we ran these experiments only using the \Greedy algorithm,
which had performed best in our earlier experiments.

For these experiments, we fixed the number of advertisers to $m=20$
and total seeds to $\SBUDGET = 200$.
To cover a spectrum of different similarities between networks,
we used the following generative model.
All advertisers initially have the same
edge probabilities \EdgeProb{u,v}.
We can assume for simplicity that the graph is complete, by setting
$\EdgeProb{u,v} = 0$ whenever $(u,v)$ is not an edge.
A parameter $s \in \SET{0, 1, \ldots, 200}$
will capture the similarity between the networks for different
advertisers.
Each advertiser $j$'s influence strengths \EdgeProb[j]{u,v} are
generated independently as follows.
Starting from $\EdgeProb{u,v}$, perform $s \cdot \frac{n}{100}$
\emph{node swaps} of the following form:
select two vertices $u,v$ independently and uniformly at random,
and switch all their associated influence probabilities, i.e.,
set $\EdgeProbP[j]{u,w} = \EdgeProb[j]{v,w}$ and
$\EdgeProbP[j]{v,w} = \EdgeProb[j]{u,w}$ for all $w \neq u,v$,
and $\EdgeProbP[j]{u,v} = \EdgeProb[j]{v,u}$
and $\EdgeProbP[j]{v,u} = \EdgeProb[j]{u,v}$.

The effect is that the influence networks for all advertisers $j$ are
exactly isomorphic to each other, i.e., no advertiser has an a priori
better network.
However, the larger the value of $s$, the more independent the
networks are, which we expected to lead to more potential to derive
payoff from all advertisers simultaneously.

\begin{figure}[htbp]
\centering
\includegraphics[width=0.6\linewidth]{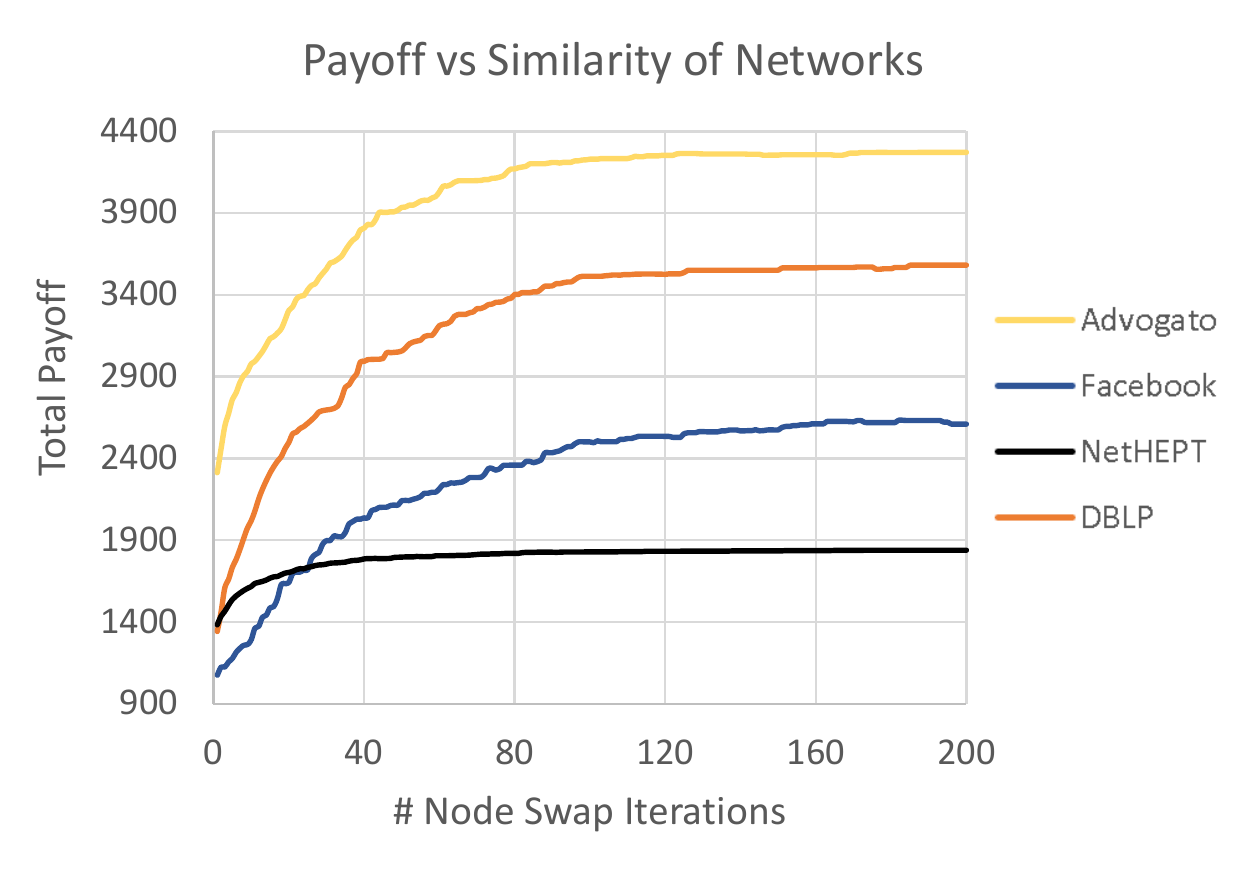}
\caption{Host payoff vs.~influence network similarity.}
\label{fig:comp}
\end{figure}


Figure~\ref{fig:comp} shows the payoff as a function of $s$.
First, notice that the host's payoff does indeed increase steeply in
$s$, nearly linearly for a non-trivial segment.
This shows that competition between advertisers for high-impact nodes
indeed restricts the host's payoff;
as the advertisers' influence networks become more dissimilar,
the host can extract more payoff from the joint ad campaigns.

Next, we observe that the relative increase in payoff
(comparing no swaps vs.~$2n$ swaps)
is noticeably higher in the \Facebook (a factor of 2.4) and
\DBLP (a factor of 2.7) networks,
compared to the \Advogato (a factor of 1.8) and \NetHEPT (a factor of 1.3)
networks.
This aligns with our earlier observations:
the \Facebook and \DBLP data sets seem to have fewer high-influence nodes,
as compared to the more even influence of nodes in \Advogato and \NetHEPT.
Thus, ensuring more independence among the isomorphic copies of the
\Facebook and \DBLP graphs creates more potential for additional \looseness=-1payoff.

Further, the payoff saturates around just $n/3$ swaps for \NetHEPT,
as opposed to around $n$ swaps for \Facebook, \Advogato and \DBLP.
This is likely because networks with more competition require
more swaps to completely realize the potential of essentially
independent \looseness=-1campaigns.

\subsection{Effect of Influence Probabilities}
For our final set of experiments, we were interested in the interplay
between competition and edge strengths.
We expected two counter-acting effects: as the probabilities on edges
increase, more different seed nodes may become capable of reaching the
same large part of the network, thus reducing the negative effects of
competition.
On the other hand, as the edge probabilities decrease,
most cascades will not spread beyond a few nodes; as a result,
all parts of the network may provide small influence,
so again, the additional detrimental effects of competition could be
reduced.

For $m$ advertisers,
we chose a combined seed set size of $\SBUDGET = 10m$,
and gave each advertiser a budget of $\BUDGET[j] = n/5$.
Different from the earlier experiments, 
we assigned \emph{uniform} probabilities of $p$ to the edges,
and varied the value of $p$.
Again, because our goal was to study the interplay between competition
and parameters of the model (rather than comparing algorithms),
we only used the \Greedy algorithm.
We were interested in the host's payoff increase as the number of
advertisers is increased from 1 to 20.
We call the ratio between the two quantities the
\emph{relative payoff at $m=20$}, and denote it by \RP.

\begin{figure}
\centering
\includegraphics[width=0.6\linewidth]{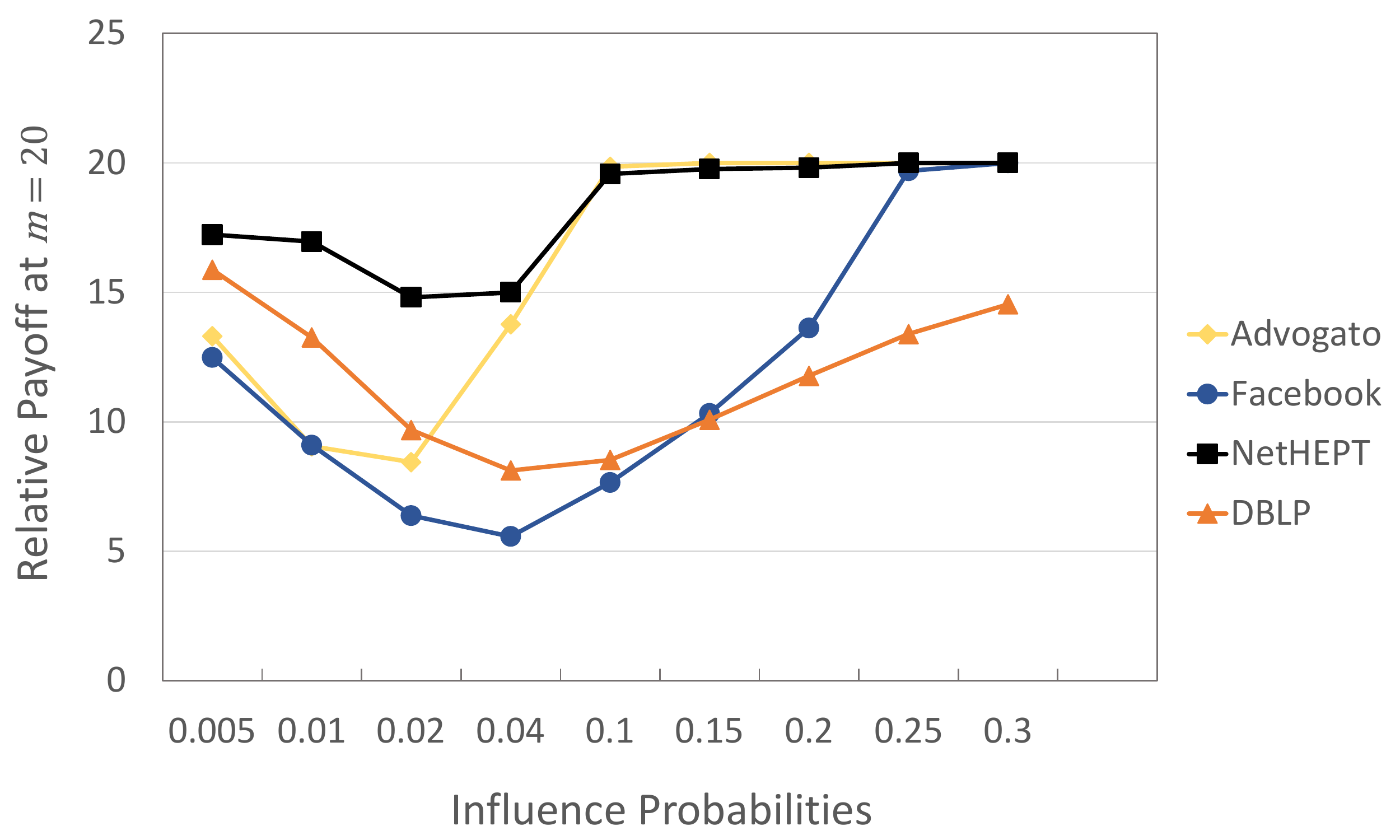}
\caption{Relative payoff with $m=20$ advertisers vs.~influence
  probability $p$ (shown in non-linear scale to display interesting
  regimes) }
  \label{fig:RPVsProb}
\end{figure}
      
Figure~\ref{fig:RPVsProb} shows the relative payoff \RP as a function
of $p$.
The two counter-acting effects produce --- for all four networks ---
a local minimum in \RP.
As $p$ grows large, the ratio saturates at 20;
this is not surprising, as with high enough probabilities, essentially
any node will reach the entire graph, and there is in effect no
competition, resulting in a 20-fold increase in host payoff.

For \Advogato and \NetHEPT (the two graphs which earlier showed less
susceptibility to competition),
the ratio \RP is minimized at the same $p=0.02$.
However, the underlying reasons appear to be different.
\Advogato has a high edge density; as a result,
advertisements spread to a large portion of the graph even for small $p$;
in particular, already for $p \geq 0.1$,  
the reach of every ad campaign $j$ matches the budget \BUDGET[j].
The curve for \NetHEPT looks similar, but for different reasons.
Here, the reason appears to be that \NetHEPT is sparse and exhibits
little competition from the start, which is confirmed by the fact that
\RP never drops below 15.
Here, the advertisers' budgets \BUDGET[j] are only fully extracted
around $p=0.25$, but the decentralized nature of the graph ensures
lack of harmful competition much earlier,
which is why \RP saturates around $p=0.1$.

For \Facebook and \DBLP, the two graphs exhibiting more competition,
the minimum is attained at $p=0.04$.
Notice that the minimum value of \RP for these data sets is
significantly smaller and the saturation of $\RP = 20$ happens later
than for the other networks.
This behavior again shows the stronger competition.
Specifically for \DBLP, even at $p=0.3$, the network only extracted
about half of the sum of all advertiser budgets.

\subsection{Scalability and Parallelization} \label{sec:scalability}
\label{sec:app-experiments}
We evaluated the scalability of the Greedy algorithm on the following
two large datasets:

\begin{itemize}
\item \Pokec \cite{konect} is a friendship network from the Slovakian 
  social network Pokec, containing 1,632,803 individuals (nodes), and
  30,622,564 directed edges (directed friendships).
\item \Flixster \cite{konect} is a social network of a movie rating 
  site where people with similar cinematic taste can connect with each
  other. It contains 2,523,386 individuals and 7,918,801 (undirected)
  edges.
\end{itemize}

We evaluated scalability by measuring the runtime as a function of
the number of advertisers $m$.
The LP Rounding algorithm does not scale to such large datasets;
its limit is a few tens of thousands of nodes and edges.



In order to utilize all 36 cores on the server,
we parallelized the RR set construction in the Greedy algorithm, which 
contributes most to the overall execution time.
Figure~\ref{fig:scalability} shows that the parallel algorithm can 
compute seed sets on the large graphs in scenarios with large numbers
of competing advertisers,
even when the sequential algorithm fails to terminate in a reasonable
amount of time.
The parallel Greedy algorithm accelerates RR-set construction by a
factor about 22 on average,
and is overall about 12 times faster than the sequential version.

Due to the large graph sizes and extremely skewed degree 
distributions in \Pokec and \Flixster,
the highly influential nodes had significantly more impact,
and the overall payoff was up to two orders of magnitude larger than
for the \Facebook, \Advogato, \NetHEPT and \DBLP networks.
This resulted in slower seed set selection,
which became a significant factor when the RR-set construction was
parallelized.
As a result, we see that the runtime of our parallel implementation
(especially in \Pokec) scales sub-linearly with respect to $m$ when there
are more than a dozen advertisers.

\begin{figure}
\includegraphics[width=\linewidth]{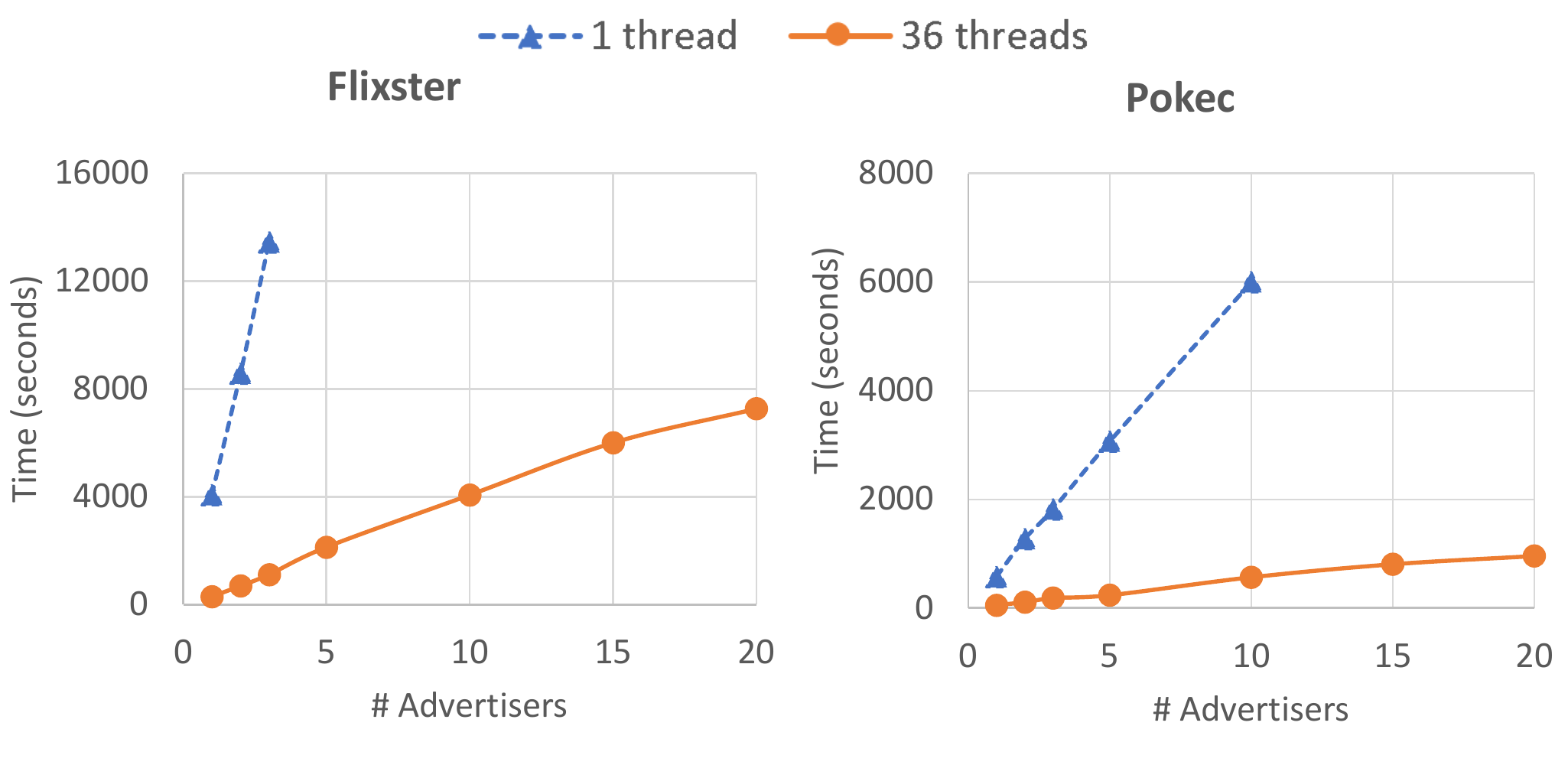}
\caption{Running times of the sequential and parallel Greedy algorithm
  on the \Flixster and \Pokec datasets.} \label{fig:scalability}
\end{figure}

\section{Conclusions and Future Work} \label{sec:conclusions}
We explored the problem of maximizing the host's payoff for a budgeted
multi-advertiser setting with constraints on the ad exposure of
each node and each advertiser's seed set size.  
We showed that under a very general class of influence models,
this problem can be cast as maximizing a submodular function subject
to two matroid constraints.
It can therefore be approximated to within essentially a factor \half
using an algorithm due to
Lee et al.~\cite{lee:sviridenko:vondrak:multiple-matroids};
this improves on the \third-approximation guarantee of
\cite{datta:majumder:shrivastava}.
When there are no constraints on individual advertisers' seed set
sizes, only on the total number of seeds,
the constraints form just a single truncated partition matroid.
Therefore, the algorithm of Vondr\'{a}k et
al.~\cite{calinescu:chekuri:pal:vondrak,vondrak:submodular-welfare}
gives $(1-\frac{1}{\e})$-approximation algorithm, 
while a simple greedy
algorithm~\cite{fischer:nemhauser:wolsey}
achieves a \half-approximation,
as observed by Tang and Yuan \cite{tang2016optimizing}. 
For the special case of the Independent Cascade model,
the Reverse Reachability Technique~\cite{borgs:brautbar:chayes:lucier,%
tang:shi:xiao:martingale,tang:xiao:shi:near-optimal}
can be combined with a linear program and rounding due to
\cite{calinescu:chekuri:pal:vondrak:ipco,gandhi:khuller:parthasarathy:srinivasan:dependent}
to yield a more efficient $(1-1/\e)$-approximation algorithm,
even under both types of constraints.

Our experiments show that in practice, the greedy algorithm slightly
outperforms the LP-rounding based algorithm, despite its worse approximation
guarantee.
They also reveal that competition between advertisers leads to a loss
in payoff for the host when the advertisers' influence networks are
more similar, and particularly so when the networks have few highly
influential nodes --- in that case, these nodes cannot simultaneously
be used as seeds for all advertisers, and 
the average exposure per advertiser becomes significantly lower than
the individual exposures of the advertisers on isolated networks.

Our work was in part motivated by considering a more ``natural''
objective function than the notion of regret from \cite{aslay2015viral}.
As discussed in Section~\ref{sec:introduction}, a less Draconian way
to penalize excess exposure is to linearly penalize over-exposure.
In fact, Tang and Yuan \cite{tang2016optimizing} analyze exactly such
an objective, simply subtracting the excess exposure from the revenue.
They provide a $\frac{1}{4}$-approximation algorithm for this objective.
Their objective function can be further generalized by scaling this
penalty term by a user-controlled constant.
This allows a host to trade off between the two very different parts
of the objective (revenue vs.~free exposure),
since they are not directly comparable.
In additional work not included here,
we show that even under this more general model,
modifications of the greedy algorithm and LP rounding algorithm
still yield constant-factor approximation guarantees.

There are several natural directions for future work.
Perhaps most directly, the LP-rounding based algorithm only satisfies
the budget constraint in expectation;
indeed, we have shown a large integrality gap for the LP.
It is a natural question whether a $(1-1/\e)$-approximation guarantee
can be obtained while always satisfying the budget constraints,
and without invoking the heavy Continuous Greedy machinery discussed
in Section~\ref{sec:general}. 

We believe that the efficiency, scalability and solution quality of our algorithms,
especially the LP-Rounding algorithm (Section~\ref{sec:LPRound}), can
be further improved, e.g., by adapting techniques such as the ensemble
approach for single-product influence maximization recently proposed
by G\"{u}ney~\cite{guney2019efficient}.
This method uses multiple batches of a small number of RR sets to
generate several candidate seed sets.
The key idea here is that the probability of finding an optimal seed
increases exponentially in the number of batches, even though smaller  
RR sets increase the variability of individual candidate sets.

An interesting empirical study would be to evaluate to what extent
influential nodes in one advertiser's network are also influential in other
advertisers' networks.
Within typical social networks, one would expect significant
differences based on individuals' expertise;
on the other hand, the use of celebrities to endorse products entirely
outside their realm of expertise shows that humans appear willing to
project expertise in one area on other areas as well.

\subsubsection*{Acknowledgments}
We would like to thank Ajitesh Srivastava for helpful technical discussions.
David Kempe was supported in part by
NSF Grant 1619458 and ARO MURI grant 72924-NS-MUR.

\bibliographystyle{plain}
\bibliography{davids-bibliography/names,davids-bibliography/conferences,davids-bibliography/bibliography,davids-bibliography/publications,sample}

\begin{thebibliography}{10}

\bibitem{FBREACH}
Helping local businesses reach more customers.
\newblock
  \url{https://www.facebook.com/business/news/facebook-local-awareness}, 2014.

\bibitem{HPLOCAL2018}
Hyperlocal advertising.
\newblock
  \url{https://www.wordstream.com/blog/ws/2018/01/25/hyperlocal-marketing},
  2018.

\bibitem{aslay2015viral}
Cigdem Aslay, Wei Lu, Francesco Bonchi, Amit Goyal, and Laks~VS Lakshmanan.
\newblock Viral marketing meets social advertising: Ad allocation with minimum
  regret.
\newblock {\em Proc. VLDB Endowment}, 8(7):814--825, 2015.

\bibitem{borgs:brautbar:chayes:lucier}
Christian Borgs, Michael Brautbar, Jennifer Chayes, and Brendan Lucier.
\newblock Maximizing social influence in nearly optimal time.
\newblock In {\em Proc. 25th ACM-SIAM Symp. on Discrete Algorithms}, pages
  946--957, 2014.

\bibitem{calinescu:chekuri:pal:vondrak:ipco}
Gruia Calinescu, Chandra Chekuri, Martin P\'{a}l, and Jan Vondr\'{a}k.
\newblock Maximizing a submodular set function subject to a matroid constraint.
\newblock In {\em Proc.~12th Intel.~Conf.~on Integer Programming and
  Combinatorial Optimization}, pages 182--196, 2007.

\bibitem{calinescu:chekuri:pal:vondrak}
Gruia Calinescu, Chandra Chekuri, Martin P\'{a}l, and Jan Vondr\'{a}k.
\newblock Maximizing a submodular set function subject to a matroid constraint.
\newblock {\em SIAM Journal on Computing}, 40(6):1740--1766, 2011.

\bibitem{chekuri:vondrak:zenklusen}
Chandra Chekuri, Jan Vondr\'{a}k, and Rico Zenklusen.
\newblock Submodular function maximization via the multilinear relaxation and
  contention resolution schemes.
\newblock In {\em Proc. 43rd ACM Symp. on Theory of Computing}, pages 783--792,
  2011.

\bibitem{chen:lakshmanan:castillo:influence-maximization-book}
Wei Chen, Laks~V.S. Lakshmanan, and Carlos Castillo.
\newblock {\em Information and Influence Propagation in Social Networks}.
\newblock Synthesis Lectures on Data Management. Morgan \& Claypool, 2013.

\bibitem{chen2009efficient}
Wei Chen, Yajun Wang, and Siyu Yang.
\newblock Efficient influence maximization in social networks.
\newblock In {\em Proceedings of the 15th ACM SIGKDD international conference
  on Knowledge discovery and data mining}, pages 199--208. ACM, 2009.

\bibitem{chen:yuan:zhang:scalable}
Wei Chen, Yifei Yuan, and Li~Zhang.
\newblock Scalable influence maximization in social networks under the linear
  threshold model.
\newblock In {\em Proc. 10th Intl. Conf. on Data Mining}, pages 88--97, 2010.

\bibitem{datta:majumder:shrivastava}
Samik Datta, Anirban Majumder, and Nisheeth Shrivastava.
\newblock Viral marketing for multiple products.
\newblock In {\em Proc. 10th Intl. Conf. on Data Mining}, pages 118--127, 2010.

\bibitem{domingos:richardson}
Pedro Domingos and Matthew Richardson.
\newblock Mining the network value of customers.
\newblock In {\em Proc. 7th Intl. Conf. on Knowledge Discovery and Data
  Mining}, pages 57--66, 2001.

\bibitem{dubey:garg:demeyer:competing}
Pradeep Dubey, Rahul Garg, and Bernard {de Meyer}.
\newblock Competing for customers in a social network: The quasi-linear case.
\newblock In {\em Proc. 2nd Workshop on Internet and Network Economics (WINE)},
  pages 162--173, 2006.

\bibitem{fischer:nemhauser:wolsey}
Marshall~L. Fisher, George~L. Nemhauser, and Laurence~A. Wolsey.
\newblock An analysis of approximations for maximizing submodular set functions
  --- ii.
\newblock {\em Mathematical Programming Study}, 8:73--87, 1978.

\bibitem{gandhi:khuller:parthasarathy:srinivasan:dependent}
Rajiv Gandhi, Samir Khuller, Srinivasan Parthasarathy, and Aravind Srinivasan.
\newblock Dependent rounding and its applications to approximation algorithms.
\newblock {\em Journal of the ACM}, 53(3):324--360, 2006.

\bibitem{goldenberg:libai:muller:talk}
Jacob Goldenberg, Barak Libai, and Eitan Muller.
\newblock Talk of the network: A complex systems look at the underlying process
  of word-of-mouth.
\newblock {\em Marketing Letters}, 12:211--223, 2001.

\bibitem{goldenberg:libai:muller:complex}
Jacob Goldenberg, Barak Libai, and Eitan Muller.
\newblock Using complex systems analysis to advance marketing theory
  development: Modeling heterogeneity effects on new product growth through
  stochastic cellular automata.
\newblock {\em Academy of Marketing Science Review}, 9:1, 2001.

\bibitem{goyal:kearns:competitive}
Sanjeev Goyal and Michael Kearns.
\newblock Competitive contagion in networks.
\newblock In {\em Proc. 44th ACM Symp. on Theory of Computing}, pages 759--774,
  2012.

\bibitem{guney2019efficient}
Evren G{\"u}ney.
\newblock An efficient linear programming based method for the influence
  maximization problem in social networks.
\newblock {\em Information Sciences}, 503:589--605, 2019.

\bibitem{hartline:mirrokni:sundarajan}
Jason~D. Hartline, Vahab~S. Mirrokni, and Mukund Sundararajan.
\newblock Optimal marketing strategies over social networks.
\newblock In {\em 17th Intl. World Wide Web Conference}, pages 189--198, 2008.

\bibitem{GoyalKearnsCompetitive}
Xinran He and David Kempe.
\newblock Price of anarchy for the $n$-player competitive cascade game with
  submodular activation functions.
\newblock In {\em Proc. 9th Conference on Web and Internet Economics (WINE)},
  pages 232--248, 2013.

\bibitem{InfluenceSpread}
David Kempe, Jon Kleinberg, and Eva Tardos.
\newblock Maximizing the spread of influence in a social network.
\newblock {\em Theory of Computing}, 11(4):105--147, 2015.

\bibitem{konect}
J{\'e}r{\^o}me Kunegis.
\newblock Konect: the koblenz network collection.
\newblock In {\em 22nd Intl. World Wide Web Conference}, pages 1343--1350,
  2013.

\bibitem{lee:sviridenko:vondrak:multiple-matroids}
Jon Lee, Maxim Sviridenko, and Jan Vondr\'{a}k.
\newblock Submodular maximization over multiple matroids via generalized
  exchange properties.
\newblock In {\em Proc. 12th Intl. Workshop on Approximation Algorithms for
  Combinatorial Optimization Problems}, pages 244--257, 2009.

\bibitem{lehmann:lehmann:nisan}
Benny Lehmann, Daniel~J. Lehmann, and Noam Nisan.
\newblock Combinatorial auctions with decreasing marginal utilities.
\newblock {\em Games and Economic Behavior}, 55(2):270--296, 2006.

\bibitem{lin2014steering}
Shuyang Lin, Qingbo Hu, Fengjiao Wang, and Philip~S. Yu.
\newblock Steering information diffusion dynamically against user attention
  limitation.
\newblock In {\em Proc. 14th Intl. Conf. on Data Mining}, pages 330--339. IEEE
  Computer Society, 2014.

\bibitem{lu:chen:lakshmanan:complementarity}
Wei Lu, Wei Chen, and Laks V.~S. Lakshmanan.
\newblock From competition to complementarity: Comparative influence diffusion
  and maximization.
\newblock {\em Proc. VLDB Endowment}, 9(2):60--71, 2015.

\bibitem{mossel:roch:submodular}
Elchanan Mossel and Sebastien Roch.
\newblock Submodularity of influence in social networks: From local to global.
\newblock {\em SIAM Journal on Computing}, 39(6):2176--2188, 2010.

\bibitem{gharan:vondrak:annealing}
Shayan {Oveis Gharan} and Jan Vondr\'{a}k.
\newblock Submodular maximization by simulated annealing.
\newblock In {\em Proc. 22nd ACM-SIAM Symp. on Discrete Algorithms}, pages
  1098--1116, 2011.

\bibitem{oxley:matroid-theory}
James~G. Oxley.
\newblock {\em Matroid Theory}.
\newblock Oxford University Press, 1992.

\bibitem{tang2016optimizing}
Shaojie Tang and Jing Yuan.
\newblock Optimizing ad allocation in social advertising.
\newblock In {\em Proceedings of the 25th ACM International on Conference on
  Information and Knowledge Management}, pages 1383--1392. ACM, 2016.

\bibitem{tang:shi:xiao:martingale}
Youze Tang, Yanchen Shi, and Xiaokui Xiao.
\newblock Influence maximization in near-linear time: A martingale approach.
\newblock In {\em Proc. 34th {ACM SIGMOD} Intl. Conference on Management of
  Data}, pages 1539--1554, 2015.

\bibitem{tang:xiao:shi:near-optimal}
Youze Tang, Xiaokui Xiao, and Yanchen Shi.
\newblock Influence maximization: near-optimal time complexity meets practical
  efficiency.
\newblock In {\em Proc. 33rd {ACM SIGMOD} Intl. Conference on Management of
  Data}, pages 75--86, 2014.

\bibitem{DAA2012report}
Upstream and YouGov.
\newblock Digital advertising attitudes report.
\newblock
  \url{http://cache-www.upstreamsystems.com/wp-content/uploads/2014/02/yougov12_report.pdf},
  2012.

\bibitem{vondrak:submodular-welfare}
Jan Vondr\'{a}k.
\newblock Optimal approximation for the submodular welfare problem in the value
  oracle model.
\newblock In {\em Proc. 40th ACM Symp. on Theory of Computing}, pages 67--74,
  2008.

\bibitem{vondrak:approximability}
Jan Vondr\'{a}k.
\newblock Symmetry and approximability of submodular maximization problems.
\newblock In {\em Proc. 50th IEEE Symp. on Foundations of Computer Science},
  pages 651--670, 2009.

\bibitem{chen:wang:wang:prevalent}
Chi Wang, Wei Chen, and Yajun Wang.
\newblock Scalable influence maximization for independent cascade model in
  large-scale social networks.
\newblock {\em Data Mining and Knowledge Discovery Journal}, 25(3):545--576,
  2012.

\bibitem{WOHLWPWTR}
Bryan Wilder, Laura Onasch-Vera, Juliana Hudson, Jose Luna, Nicole Wilson,
  Robin Petering, Darlene Woo, Milind Tambe, and Eric Rice.
\newblock End-to-end influence maximization in the field.
\newblock In {\em Proc. 17th Intl. Conf. on Autonomous Agents and Multiagent
  Systems}, pages 1414--1422, 2018.

\end{thebibliography}

\newpage

\appendix

\section{Review of Basic Concepts} \label{sec:review}
In this section, we provide definitions and descriptions of several
concepts we believe to be standard,
but which some readers may want to review.

\subsection{Matroids and (truncated) Partition Matroids}
\label{sec:matroid-definitions}

\begin{definition}[Matroid]
  \label{def:matroid}
  A \emph{matroid} is a non-empty, downward-closed%
\footnote{That is, $\mathcal{I} \neq \emptyset$, and
if $S \in \mathcal{I}$, then $S' \in \mathcal{I}$ whenever
$S' \subseteq S$.}
set system $(X,\mathcal{I})$,
with the following \emph{exchange property}:
if $S, S' \in \mathcal{I}$, and $|S| < |S'|$, then there exists some
element $x \in S' \setminus S$ such that
$S \cup \SET{x} \in \mathcal{I}$.
The sets in $\mathcal{I}$ are called \emph{independent sets}.
\end{definition}

\begin{definition}[Partition Matroid, Truncation]
  \label{def:partition-matroid}
A partition matroid is defined as follows:
Let $X_1, X_2, \ldots, X_p$ be a disjoint partition of $X$,
and $b_1, b_2, \ldots, b_p$ be non-negative integers.
A set $S$ is independent (i.e., in $\mathcal{I}$) iff
$|S \cap X_i| \leq b_i$ for all $i$.

A \emph{truncation} of a matroid $(X,\mathcal{I})$ is obtained by
replacing $\mathcal{I}$ with $\Set{S \in \mathcal{I}}{\SetCard{S} \leq k}$,
i.e., by discarding all sets of size exceeding $k$.
It is easy to see that any truncation of a matroid is again a matroid.
\end{definition}

\subsection{The Independent Cascade Model} \label{sec:IC-definition}

The Independent Cascade (IC) Model \cite{goldenberg:libai:muller:complex,%
  goldenberg:libai:muller:talk,InfluenceSpread}
is defined as follows.
(We give the generalization to multiple ads directly here.)
For each ad $j$ and (directed) edge $e=(u,v)$,
there is a known probability \EdgeProb[j]{e} with which $u$ will
succeed in activating $v$. 
Starting from a seed set \Active[j]{0},
in each round $t$, each newly activated (for ad $j$) node $u$ can make
one attempt to activate each currently inactive (for $j$) neighbor $v$.
The attempt succeeds with probability \EdgeProb[j]{u,v},
independently of other activation attempts.
If at least one activation attempt (for $j$) on $v$ is successful,
$v$ will become active for $j$ at time $t+1$, i.e., become part
of \Active[j]{t+1}.
As proved, e.g., in \cite{InfluenceSpread}, the Independent Cascade
Model is a special case of the General Threshold Model from
Section~\ref{sec:problem-statement}, by setting
$\InfFun[j]{v}{S} = 1-\prod_{u \in S} (1-\EdgeProb[j]{u,v})$.

A very useful alternative view of the Independent Cascade Model was
first shown in \cite{InfluenceSpread}, and heavily used in subsequent
work: generate graphs \GRAPH[j] by including each edge $e$ in
\GRAPH[j] independently with probability \EdgeProb[j]{e}.
Then, the distribution of nodes activated by ad $j$ in the end,
when starting from the set \Active[j]{0},
is the same as the distribution of nodes reachable from \Active[j]{0} in
the random graph \GRAPH[j].


\section{Number of Reverse Reachable Sets vs.~Estimation Error} \label{sec:RR-set-experiments}
An important part of all algorithms for the optimization problem is
being able to evaluate the objective function.
Doing so exactly is \#P-complete
\cite{chen:yuan:zhang:scalable,chen:wang:wang:prevalent}.
Good approximations can be obtained by using Reverse Reachable (RR)
sets, as described in Section~\ref{sec:IC}.
However, the theoretical bounds from Section~\ref{sec:IC} that
\emph{guarantee} a good approximation of the objective, while
polynomial, are sufficiently large that algorithms using this many
RR sets would not scale well.
The goal of this section is to experimentally evaluate how many RR
sets yield good estimates of the objective function
\emph{in practice}.

To do so, we ran experiments on the four networks from
Section~\ref{sec:experiments}: \Facebook, \Advogato,
\NetHEPT and \DBLP.
For every advertiser, we created independent influence networks as
described in Section~\ref{sec:experiments}.
In our experiments, we varied the number of advertisers $m$ from $1$ to
$20$ and the number of RR sets from $mn$ to $100mn$. 
For simplicity, we assumed that each advertiser $j$'s value function is
simply the (unscaled, and unbudgeted) number of nodes exposed to $j$'s
ad.

The main idea is to obtain several independent estimates of the
objective function by drawing independent samples of RR sets.
When the number of RR sets in each sample is large enough,
the estimates should be similar to each other according to different
metrics, such as absolute error, variance, etc.

\subsection{Absolute Error} \label{sec:err}
In the first set of experiments, we computed reference seed sets
$\Active[j]{0}$ by running the Greedy algorithm
using $200mn$ RR sets for estimates.%
\footnote{Whether these were actually good seed sets is secondary,
as we used them to measure the quality of the \emph{estimation} only.
Running the Greedy algorithm ensured that we avoided trivial cases,
such as accidentally choosing seed sets with extremely small
influence.}

Having computed the seed sets \Active[j]{0}, we estimated their payoff
(for all advertisers) 
using different numbers of RR sets.
The absolute error of an estimate is the sum, over all advertisers, of
the absolute difference of the estimate using the $200mn$ RR sets
and the estimate of the newly drawn RR sets.


In Figure~\ref{fig:l1err}, we plot the absolute error for all four
networks, as we vary the number of RR sets drawn.
Each plot was obtained by averaging the absolute error over 100
independent draws of the evaluation RR sets.
We observe that for almost all network sizes $n$ and numbers of
advertisers $m$, the error is less than 2\% 
once $\rho^{(j)}\geq 10n$ for all $j$;
at that point, the error curve also nearly flattens out.

\begin{figure*}[htbp]
\includegraphics[width=\textwidth]{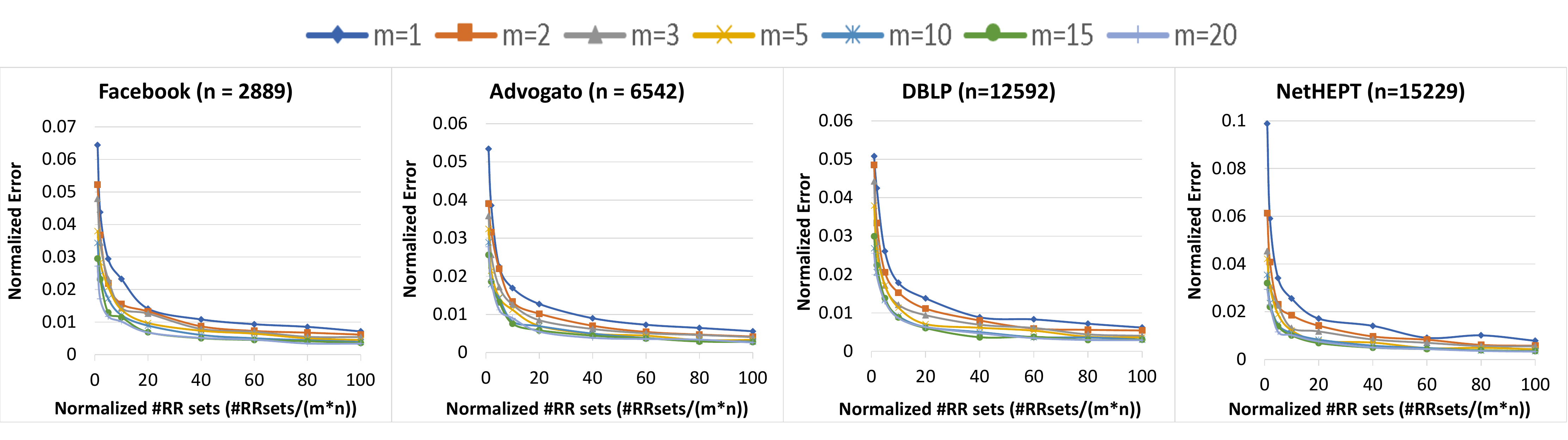}
\caption{Absolute error in payoff estimation for different numbers of
  RR sets and advertisers $m$. The $x$-axis plots the ratio between the
  total number of RR sets and $mn$.}
  \label{fig:l1err}
\end{figure*}

\subsection{Standard Deviation} \label{sec:stddev}
As a second error measure, we considered the standard deviation.
The experimental setup was identical to the one in Section \ref{sec:err}.
The only difference is that instead of the total absolute error,
in Figure~\ref{fig:stddev}, we plot the standard deviation of the
estimations, normalized by the estimated mean.

\begin{figure*}[htbp]
\includegraphics[width=\textwidth]{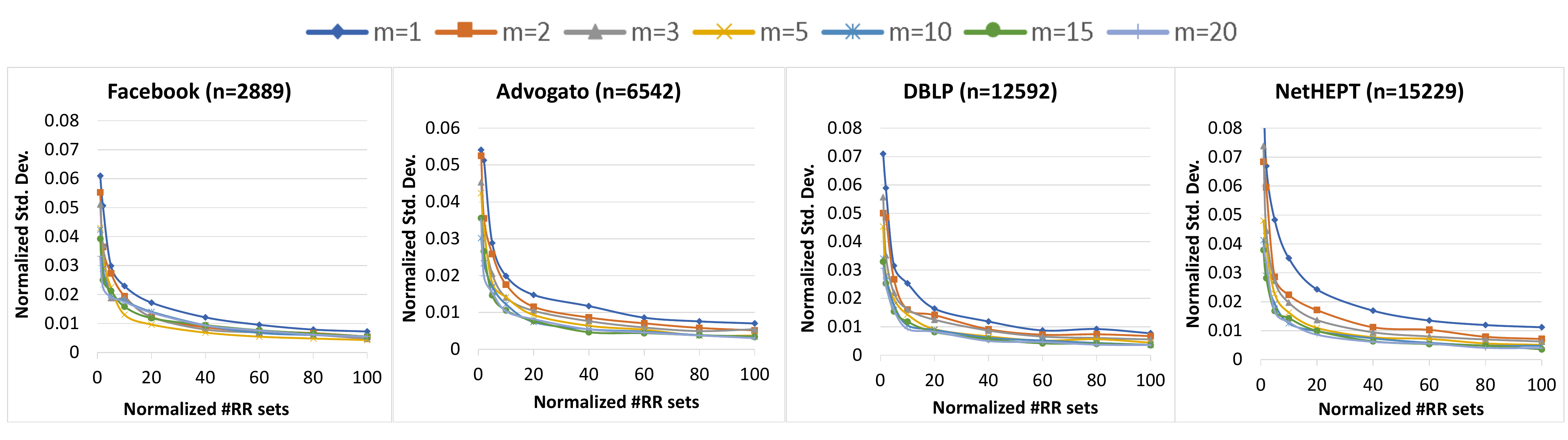}
\caption{Standard deviation of the estimated payoff, normalized by the
  mean, for different numbers of RR sets and advertisers $m$. The
  $x$-axis plots the ratio between the total number of RR sets and $mn$.}
  \label{fig:stddev}
\end{figure*}

The plots are nearly identical to those in Figure~\ref{fig:l1err},
confirming that our conclusions about estimation errors are robust to
the specific error measure used.

\begin{figure*}[htbp]
\includegraphics[width=\textwidth]{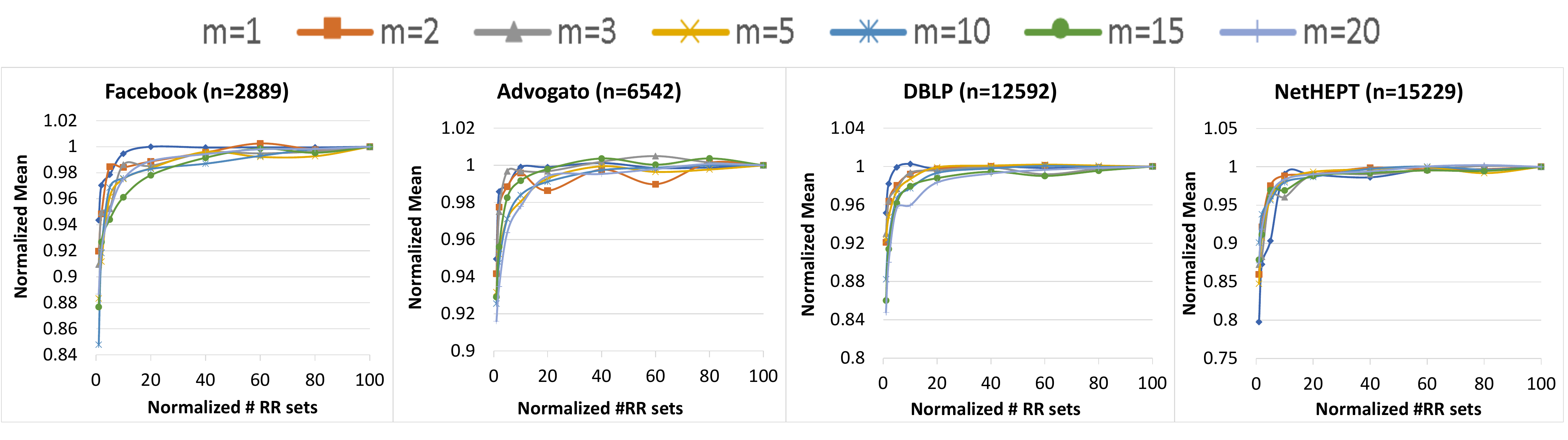}
\caption{Average payoff of selected seed set normalized by the payoff of seed set obtained with very large number of RR sets $(100mn)$. The x-axis plots the ratio of total number of RR sets and $m*n$.}
  \label{fig:mean}
\end{figure*}

\subsection{Impact on Optimization} \label{sec:mean}
In the third set of experiments, we analyzed the impact of the number of RR
sets on the quality of the seed sets selected by the Greedy algorithm.
Unlike in Sections~\ref{sec:stddev} and \ref{sec:err}, the Greedy
algorithm was now run with a limited number of RR samples for
estimation.
(The number is specified as the number of RR sets per
advertiser.)
In other words, the Greedy algorithm used estimates of higher
variance, and thus was expected to produce suboptimal seed sets.
To evaluate the performance, we normalized the objective value against
the objective value that a Greedy algorithm using $100mn$ RR sets
would obtain.

Figure~\ref{fig:mean} plots the quality of the Greedy algorithm,
normalized by the average payoff.
Plots are shown for all four networks and different values of $m$,
and are averaged over 100 runs.
We observe that with a small number of RR sets, such as $\rho^{(j)}=n$,
the quality of the selected seed sets can be quite poor,
and the average payoff obtained can be 10--20\% less than what can be
achieved using a larger number of RR sets, such as $\rho^{(j)}=100n$.
This is because with small $\rho^{(j)}$,
the estimation errors are high;
thus, the Greedy algorithm may overestimate the objective value of
some seed sets and underestimate the objective value of other seed sets.%
\footnote{Recall that payoffs were estimated with Reverse Reachability
  sets by computing the fraction of RR sets that contain at least one
  of the nodes from the seed set.}


For almost all cases, when $\rho^{(j)}=10n$, the payoff achieved was
at least 98\% of the payoff with $100n$ RR sets per advertiser.
As in Sections~\ref{sec:err} and \ref{sec:stddev},
the curves nearly flatten out for $\rho^{(j)} \geq 10n$.
Hence, we infer that empirically, about $10n$ RR sets per advertiser
are sufficient to obtain very accurate estimates of the objective
function, and to guarantee that the greedy algorithm performs almost
as well as with much larger numbers of RR sets.


\end{document}